\documentclass[
aip,
cha,
amsmath,amssymb,
reprint,
floatfix
]{revtex4-1}

\usepackage{graphicx}
\usepackage{dcolumn}
\usepackage{bm}
\usepackage[utf8]{inputenc}
\usepackage[T1]{fontenc}
\usepackage{mathptmx}
\usepackage{etoolbox}
\usepackage{amsfonts}
\usepackage{amsthm}
\usepackage{mathrsfs}
\usepackage{xcolor}
\usepackage{textcomp}
\usepackage{booktabs}
\usepackage{enumitem}
\usepackage{mathtools}
\usepackage{array}
\usepackage{placeins}

\makeatletter
\def\@email#1#2{%
	\endgroup
	\patchcmd{\titleblock@produce}
	{\frontmatter@RRAPformat}
	{\frontmatter@RRAPformat{\produce@RRAP{*#1\href{mailto:#2}{#2}}}\frontmatter@RRAPformat}
	{}{}
}%
\makeatother

\newtheorem{theorem}{Theorem}[section]
\newtheorem{lemma}[theorem]{Lemma}

\newtheorem{definition}{Definition}[section]

\begin{document}
	
	% Two-column mathematical layout tuning: keep normal font size, but give
	% multi-line displays enough vertical room and permit sensible page breaks.
	\allowdisplaybreaks[3]
	\setlength{\jot}{4pt}
	\setlength{\abovedisplayskip}{7pt plus 2pt minus 2pt}
	\setlength{\belowdisplayskip}{7pt plus 2pt minus 2pt}
	\setlength{\abovedisplayshortskip}{4pt plus 2pt minus 1pt}
	\setlength{\belowdisplayshortskip}{5pt plus 2pt minus 1pt}
	
	\title{Topology-Biased Resource Constraints Shape Synchronization Pathways in Hindmarsh–Rose Oscillator Networks}
	
	\author{Zhouqi Li}
	\affiliation{School of Computer and Information Management, Inner Mongolia University of Finance and Economics, Hohhot, China}
	
	\author{Xiaoyan He}
	\author{Yuanhong Bi}
	\affiliation{School of Statistics and Mathematics, Inner Mongolia University of Finance and Economics, Hohhot, China}
	
	\author{Zengping Zhang*}
	\email[Corresponding Zengping Zhang: ]{zzp@imufe.edu.cn}
	\affiliation{School of Computer and Information Management, Inner Mongolia University of Finance and Economics, Hohhot, China}
	
	\date{\today}
	
	\begin{abstract}
		In oscillator networks sustained by finite resources, synchronization can depend on both the total resource and its spatial distribution. We study a duplex system whose activity layer consists of chaotic Hindmarsh--Rose oscillators and whose transport layer redistributes a conserved resource through a degree-biased Markov process. The stationary resource field is characterized analytically, and its existence, uniqueness, and convergence are established. By embedding this field into a local adaptive feedback law, the available resource is converted into node-dependent dissipation, for which Lyapunov analysis guarantees convergence to the synchronization manifold. Numerical results show that topology bias reorganizes the transient route to synchronization. Weak bias produces an almost collective contraction, whereas intermediate bias creates a hub-initiated recruitment hierarchy that extends toward middle-degree and peripheral nodes. Under stronger bias, the degree hierarchy becomes more pronounced while peripheral recruitment slows because adaptive dissipation is concentrated on structurally privileged nodes. Across the explored parameter range, this localization--coverage tradeoff is accompanied by a non-monotonic synchronization response at fixed total resource. The largest Lyapunov exponent remains positive after synchronization and the correlation dimension changes only modestly, consistent with suppression of transverse deviations while chaotic motion is retained on the synchronization manifold.
	\end{abstract}
	
	\keywords{chaotic synchronization, Hindmarsh--Rose oscillators, adaptive feedback control, resource-constrained networks, topology-biased transport}
	
	\maketitle
	
	%\linenumbers

	\begin{quotation}
		Oscillator networks with limited resources can synchronize differently depending on where those resources are available. Here, a conserved resource is redistributed according to network structure and weights local adaptive dissipation in a chaotic oscillator network. Moderate localization leads to degree-ordered recruitment initiated by highly connected nodes, whereas stronger localization further separates the degree classes but delays peripheral recruitment. The results therefore distinguish the strength of a synchronization hierarchy from its effectiveness in recruiting the network as a whole.
	\end{quotation}

	\section{Introduction}
	
	Synchronization in networks of nonlinear oscillators is a fundamental problem in complex-system dynamics. Coupled oscillatory units are widely found in chemical reactions \cite{kiss2002emerging}, laser systems \cite{nair2021disorder}, Josephson junctions \cite{wiesenfeld1996synchronization}, power grids \cite{motter2013spontaneous}, and biological rhythms \cite{winfree1967biological}. For example, synchronization plays an important role in brain activity, including motor cortical coordination and memory-related processes \cite{riehle1997spike,fell2011role}, and is closely associated with neurological diseases such as Parkinson's disease, Alzheimer's disease, and epilepsy \cite{touboul2020noise,pusil2019aberrant,zhang2020spontaneous}.
	
	The dynamics underlying synchronization are influenced by various factors, including time delays, noise, coupling forms, and network topology \cite{lakshmanan2016dynamical,lepek2018spatial,sun2011effects,sun2010effects,breunung2025noise,jalili2009synchronizing,parastesh2022synchronization,gerster2020fitzhugh,huang2025effects,afifurrahman2020stability}. For example, early studies using the Hindmarsh--Rose model showed that increasing coupling can simplify firing-interval dynamics, enlarge phase-locked neuronal populations, and lead to synchronized chaotic states \cite{wang1993dynamical}. Real oscillator networks require continuous resource exchange to sustain ordered activity far from equilibrium. Maintaining synchronization generally requires a nonzero mean resource throughput \cite{sarasola2004energy,torrealdea2009energy,moujahid2011efficient}. Such resource dependence is manifested in systems across different fields. For instance, fMRI studies have shown that variations in blood oxygenation are closely associated with the level of neuronal activity \cite{logothetis2001neurophysiological,heeger2002does,sheth2004linear,allen2007transcranial,logothetis2008we}. In real systems, however, the available resources are not inexhaustible. Understanding how finite resource constraints shape synchronization in oscillator networks therefore constitutes an important dynamical problem.
	
	Resource constraints have recently been incorporated into oscillator networks in several forms, including coupling mediated by shared resources or dynamical environments \cite{li2012quorum,singla2016environmental,postnov2005oscillator,resmi2010environment}, constraints arising from resource consumption, activity-dependent resource pools, or excitability limitations \cite{kromawiley2021resource,franovic2022resources,frolov2021extreme}, and resource costs or finite coupling budgets \cite{zhang2020energy,liang2022time,zhang2021designing}. Resource redistribution has also been modeled as an independent dynamical process on a separate network layer and coupled to the oscillator dynamics \cite{virkar2016resource,nicosia2017collective}.
	
	Among these studies, Nicosia \textit{et al.} \cite{nicosia2017collective} placed Kuramoto phase oscillators and random-walk transport on two layers of a multiplex network and reported bistability, hysteresis, and explosive synchronization. In their model, resources mainly modulate the natural frequencies of the oscillators, while the relationship between resource allocation and synchronization stability remains to be further explored. In a different physical perspective, resource-dependent feedback can provide additional dissipation that suppresses dynamical divergence and stabilizes synchrony \cite{yamakou2020chaotic,kobiolka2025reduced}. This raises a separate question: when the total resource is fixed, does stronger topology bias only change the synchronization time, or can it also alter the temporal ordering and extent of recruitment across the network?
	
	We consider a setting in which a fixed amount of resource is competitively distributed among the oscillators, producing a spatially heterogeneous allocation. For chaotic oscillators, transverse perturbations away from the synchronization manifold may be amplified by the intrinsic dynamics. Here the local resource share weights an adaptive feedback term applied to each oscillator, providing additional dissipation that suppresses transverse deviations and promotes synchronization. The model therefore links topology-dependent resource transport to synchronization stability through resource-dependent dissipation.
	
	In this paper, we consider a resource-constrained network of chaotic Hindmarsh--Rose oscillators coupled to a topology-dependent transport layer. A degree-biased Markov process redistributes a finite shared resource, and its stationary distribution determines the local availability of adaptive dissipation. We establish the existence, uniqueness, and convergence of this stationary resource field and prove synchronization of the resulting closed-loop network by Lyapunov analysis. We then examine how redistributing the same total resource changes the transient organization of synchronization. Weak bias is associated with nearly collective contraction, while intermediate bias produces a hub-initiated ordering of recruitment. Under stronger bias, the degree hierarchy becomes more pronounced but recruitment of peripheral nodes slows. Thus, hierarchy strength and network-wide recruitment effectiveness do not increase together over the explored bias range. The synchronized collective state retains the intrinsic chaotic dynamics of the individual oscillator.
	
	The remainder of this paper is organized as follows. Section~2 introduces the model. Section~3 proves the uniqueness and convergence of the resource field under degree bias and establishes synchronization under the proposed adaptive feedback through Lyapunov analysis. Section~4 presents the numerical results and analyzes the dynamics induced by different resource-allocation regimes. Section~5 concludes the paper.

	\section{Modeling and Preparatory Knowledge}
	\subsection{Algebraic Graph Theory}
	
	Consider a network \(G=(V,E)\) consisting of \(N\) neurons, where \(V=\{1,2,\dots,N\}\) is the node set and \(E\subseteq V\times V\) denotes the set of synaptic connections. The network topology is described by the adjacency matrix \(A=[a_{ij}]_{N\times N}\), where \(a_{ij}=1\) if nodes \(i\) and \(j\) are connected and \(a_{ij}=0\) otherwise. The degree of node \(i\) is \(k_i=\sum_{j=1}^{N}a_{ij}\), and the corresponding degree matrix is \(D=\mathrm{diag}(k_1,k_2,\dots,k_N)\). The Laplacian matrix is defined as \(L=D-A=[l_{ij}]\in\mathbb{R}^{N\times N}\), where \(l_{ii}=k_i\) and \(l_{ij}=-a_{ij}\) for \(i\neq j\).
	
	For an undirected network, the Laplacian matrix has two fundamental properties relevant to synchronization analysis:
	
	\begin{enumerate}[label=(\arabic*)]
		
		\item \textbf{Positive Semi-Definiteness.}
		For any vector \(\varphi=[\varphi_1,\varphi_2,\dots,\varphi_N]^T\in\mathbb{R}^N\),
		\[
		\varphi^T L\varphi
		=
		\frac{1}{2}\sum_{i=1}^{N}\sum_{j=1}^{N}
		a_{ij}(\varphi_i-\varphi_j)^2
		\geq 0.
		\]
		This property reflects the dissipative nature of diffusive coupling, which tends to reduce state differences between interconnected nodes.
		
		\item \textbf{Zero Eigenvalue and Connectivity.}
		The Laplacian matrix satisfies \(L\mathbf{1}=0\), where \(\mathbf{1}=[1,1,\dots,1]^T\). The multiplicity of the zero eigenvalue equals the number of connected components of the graph. In particular, for a connected network, the zero eigenvalue is simple. Moreover, \(L\mathbf{1}=0\) implies that the diffusive coupling vanishes on the synchronization manifold
		\[
		X_1(t)=X_2(t)=\cdots=X_N(t)=s(t).
		\]
		
	\end{enumerate}
	
	\subsection{Model Description}
	
	Consider a network of \(N\) Hindmarsh--Rose (HR) neurons interconnected through a connected undirected graph \(G=(V,E)\). The state of the \(i\)-th neuron is denoted by
	\[
	X_i(t)=\bigl[x_i(t),y_i(t),z_i(t)\bigr]^\top\in\mathbb{R}^3,
	\qquad i=1,2,\dots,N.
	\]
	The neurons are coupled through the membrane-potential channel. Accordingly, the network dynamics are described by
	\begin{equation}\label{eq:net_general}
		\dot{X}_i(t)
		=
		F\bigl(X_i(t)\bigr)
		-\sigma\sum_{j=1}^{N}l_{ij}\Gamma X_j(t)
		+B u_i(t),
	\end{equation}
	where \(\sigma>0\) denotes the coupling strength, \(L=[l_{ij}]\) is the Laplacian matrix of the underlying network, and
	\(\Gamma=\mathrm{diag}(1,0,0)\) specifies the coupling channel. The scalar control input \(u_i(t)\in\mathbb{R}\) acts on the same channel through \(B=[1,0,0]^\top\).
	
	The Hindmarsh--Rose model \cite{hindmarsh1984model} is adopted to describe the intrinsic neuronal dynamics. Compared with the detailed Hodgkin--Huxley model \cite{hodgkin1952quantitative}, the HR model retains the essential fast--slow dynamics of neuronal firing with considerably lower computational complexity, while reproducing a variety of dynamical behaviors including periodic spiking, bursting, and chaotic oscillations \cite{izhikevich2003model}. The intrinsic vector field is given by
	\begin{equation}\label{eq:HR_F}
		F(X_i)=
		\begin{bmatrix}
			y_i-a x_i^{3}+b x_i^{2}-z_i+I_{\mathrm{ext}}\\
			c-d_0 x_i^{2}-y_i\\
			r\bigl[s_0(x_i+x_0)-z_i\bigr]
		\end{bmatrix}.
	\end{equation}
	Here, \(x_i\) represents the membrane potential, while \(y_i\) and \(z_i\) denote the fast and slow recovery variables, respectively. The parameters \(a\), \(b\), \(c\), and \(d_0\) determine the fast neuronal dynamics; \(r\) controls the time scale of the slow variable; \(s_0\) and \(x_0\) regulate the slow recovery process; and \(I_{\mathrm{ext}}\) denotes the external input current.
	
	To characterize the collective behavior of the network, define the mean state
	\begin{equation}\label{eq:Xbar_def}
		\bar X(t)=\frac{1}{N}\sum_{i=1}^{N}X_i(t),
	\end{equation}
	and the synchronization error of neuron \(i\) as
	\begin{equation}\label{eq:ei_def}
		e_i(t)=X_i(t)-\bar X(t).
	\end{equation}
	
	It follows immediately that 
	\[
	\sum_{i=1}^{N}e_i(t)=0.
	\]
	
	\begin{definition}
		The network \eqref{eq:net_general} is said to achieve complete 
		synchronization if
		\[
		\lim_{t\to\infty}\|e_i(t)\|=0,
		\qquad i=1,2,\dots,N.
		\]
	\end{definition}

	\subsection{Topology-Driven Resource Allocation}
	
	Synchronization generally requires additional control effort, yet finite resources cannot be distributed uniformly across all neurons. We therefore consider a topology-dependent allocation mechanism in which neurons compete for a shared resource, and their access to this resource is modulated by their structural positions in the network.
	
	Structural centrality is known to influence collective neuronal dynamics. Neuronal hub cells can contribute to the initiation and propagation of epileptic activity\cite{morgan2008nonrandom,cossart2014operational}. Motivated by these observations, node degree is adopted here to characterize the competitive preference for the available resource.

	\begin{figure}[t]
		\centering
		\includegraphics[width=0.5\textwidth]{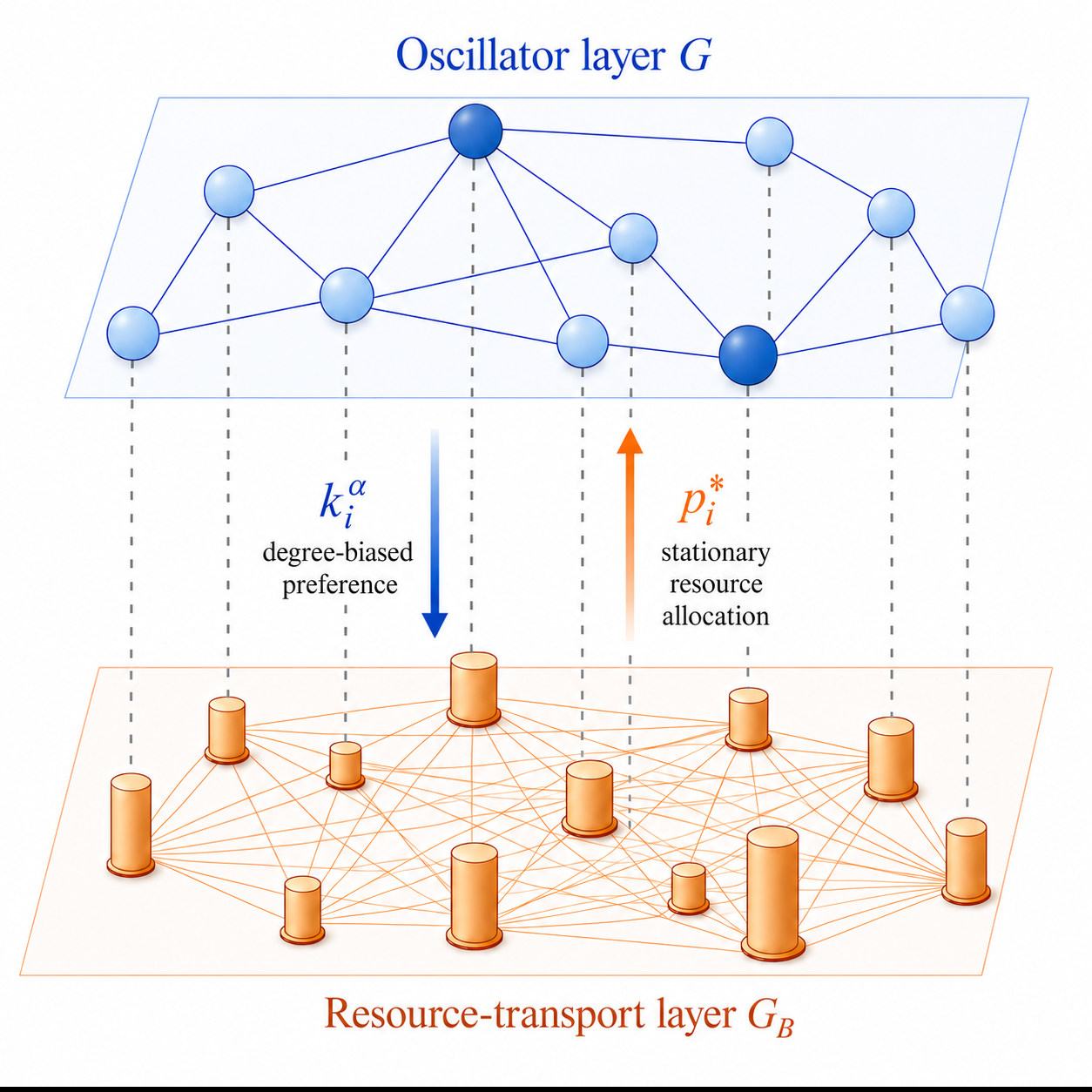}
		\caption{
			Schematic of the two-layer network. The upper layer describes electrical interactions among HR neurons, while the lower layer represents the redistribution of a finite resource. The transport topology determines the available resource pathways, whereas the structural connectivity of the neuronal layer modulates the competitive preference of individual nodes.
		}
		\label{fig:multilayer_network}
	\end{figure}
	
	As illustrated in Fig.~\ref{fig:multilayer_network}, the system is represented by two interconnected layers. The upper layer describes the electrical interactions among HR neurons, whereas the lower layer describes the redistribution of a finite resource. The two layers share the same set of neurons but have different edge structures. Let \(G_B\) denote the resource-transport network with adjacency matrix \(B=[b_{ij}]\). The transport layer provides possible pathways for resource redistribution, while the degree information of the neuronal interaction layer determines the competitive preference of individual neurons.
	
	To describe this redistribution process, a resource unit is regarded as a walker moving over \(G_B\). When the walker is located at node \(i\), the probability of transferring to a neighboring node \(j\) is defined as
	\begin{equation}\label{eq:transition_prob}
		\pi_{ij}
		=
		\frac{b_{ij}f_j}
		{\displaystyle\sum_{l=1}^{N}b_{il}f_l},
	\end{equation}
	where the topology-dependent preference of node \(j\) is
	\begin{equation}\label{eq:bias_function}
		f_j=k_j^{\alpha}.
	\end{equation}
	Here, \(k_j\) is the degree of neuron \(j\) in the neuronal interaction network, and \(\alpha\) is the degree-preference exponent. Equation~\eqref{eq:transition_prob} therefore separates two factors in the resource allocation process: \(b_{ij}\) determines whether a resource unit can be transported from node \(i\) to node \(j\), whereas \(k_j^{\alpha}\) determines the relative competitive advantage of the destination node.
	
	Let \(p_i(t)\) denote the fraction of the available resource assigned to neuron \(i\) at discrete time \(t\), satisfying
	\[
	p_i(t)\geq 0,
	\qquad
	\sum_{i=1}^{N}p_i(t)=1.
	\]
	The redistribution of the resource is then governed by the Markov process
	\begin{equation}\label{eq:markov_update}
		p_j(t+1)
		=
		\sum_{i=1}^{N}\pi_{ij}p_i(t).
	\end{equation}
	The vector
	\(p(t)=[p_1(t),p_2(t),\dots,p_N(t)]^\top\)
	therefore characterizes the instantaneous distribution of the common resource over the network.
	
	The exponent \(\alpha\) controls how strongly network structure affects resource competition. When \(\alpha>0\), transitions toward highly connected neurons are favored, and hub nodes tend to acquire a larger fraction of the available resource. When \(\alpha=0\), the structural preference disappears and the allocation reduces to an unbiased random walk on \(G_B\). When \(\alpha<0\), weakly connected neurons are preferentially favored. Varying \(\alpha\) therefore determines how strongly degree heterogeneity is reflected in resource allocation.
	
	For an ergodic transport process, \(p(t)\) converges to a stationary distribution \(p^{*}\), which determines the long-term allocation of the finite resource among neurons. This stationary resource field will subsequently be incorporated into the adaptive synchronization mechanism, so that the control capability of each neuron is jointly determined by its synchronization demand and its topology-dependent access to the shared resource.

	\section{RESOURCE-WEIGHTED ADAPTIVE SYNCHRONIZATION}
	
	Weak diffusive coupling is generally insufficient to guarantee synchronization of a network of chaotic oscillators. For Hindmarsh--Rose neurons, transverse perturbations may be amplified by the intrinsic nonlinear dynamics, whereas diffusive coupling provides only a finite dissipative effect through the membrane-potential channel. Synchronization can therefore be viewed as a competition between nonlinear expansion and network-induced dissipation. In the present framework, additional feedback is introduced to compensate for the missing dissipation, while its node-wise strength is constrained by the topology-dependent resource distribution established in Section~2.
	
	We first characterize the asymptotic resource field generated by the resource-transport process. To distinguish the discrete resource dynamics from the continuous neuronal dynamics, the redistribution step is indexed by \(n\). Define the probability simplex
	\begin{equation}\label{eq:probability_simplex}
		\Delta
		=
		\left\{
		p\in\mathbb{R}_{+}^{N}
		\,\middle|\,
		\mathbf{1}^{\top}p=1
		\right\}.
	\end{equation}
	Let \(P=[\pi_{ij}]_{N\times N}\) denote the transition matrix of the
	topology-dependent resource-transport process defined in
	Section~2. The resource distribution
	\(p^{(n)}=[p_1^{(n)},\dots,p_N^{(n)}]^\top\in\Delta\) evolves according to
	\begin{equation}\label{eq:resource_markov_ch3}
		p^{(n+1)}
		=
		P^\top p^{(n)}.
	\end{equation}
	Since \(P\mathbf{1}=\mathbf{1}\), the total resource is conserved.
	The constraint \(\mathbf{1}^{\top}p=1\) represents a finite shared
	resource pool, so that increasing the resource share of one node
	necessarily reduces that available to the others.
	
	For notational convenience, define
	\[
	w_i=k_i^{\alpha},
	\qquad
	c_i=\sum_{l=1}^{N}b_{il}w_l,
	\qquad
	\mathcal{Z}=\sum_{m=1}^{N}w_m c_m.
	\]
	
	\begin{theorem}\label{thm:stationary_resource}
		Consider the topology-dependent resource-transport process
		\eqref{eq:resource_markov_ch3}, with transition matrix
		\(P=[\pi_{ij}]_{N\times N}\) defined in Section~2.
		Suppose that \(G_B\) is connected and undirected.
		Then the associated Markov chain is irreducible and admits a unique
		stationary distribution \(p^*\in\Delta\), whose components are
		explicitly given by
		\begin{equation}\label{eq:stationary_resource_explicit}
			p_i^{*}
			=
			\frac{w_i c_i}{\mathcal{Z}}
			=
			\frac{
				k_i^{\alpha}
				\displaystyle\sum_{l=1}^{N}b_{il}k_l^{\alpha}
			}{
				\displaystyle\sum_{m=1}^{N}
				k_m^{\alpha}
				\sum_{l=1}^{N}b_{ml}k_l^{\alpha}
			},
			\qquad
			i=1,\dots,N.
		\end{equation}
		In particular,
		\begin{equation}\label{eq:stationary_resource_property}
			P^\top p^{*}=p^{*},
			\qquad
			p_i^{*}>0,
			\qquad
			\mathbf{1}^{\top}p^{*}=1.
		\end{equation}
		If the Markov chain is additionally aperiodic, then
		\begin{equation}\label{eq:stationary_resource_limit}
			\lim_{n\rightarrow\infty}
			(P^\top)^n p^{(0)}
			=
			p^{*},
			\qquad
			\forall\,p^{(0)}\in\Delta.
		\end{equation}
	\end{theorem}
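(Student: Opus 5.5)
The plan is to verify detailed balance directly, then deduce stationarity, irreducibility, uniqueness, and convergence from standard finite Markov chain theory. Throughout I assume every neuronal degree satisfies \(k_i\ge 1\), so that \(w_i=k_i^{\alpha}>0\) for any real \(\alpha\); this holds if the neuronal graph has no isolated nodes. Since \(G_B\) is connected, every node has at least one transport neighbour, so \(c_i=\sum_l b_{il}w_l>0\) and \(\pi_{ij}=b_{ij}w_j/c_i\) is well defined. Clearly \(\pi_{ij}\ge0\) and \(\sum_j\pi_{ij}=1\), i.e. \(P\mathbf 1=\mathbf 1\).

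\textbf{Stationarity.} The key step is the weighted-edge (reversibility) structure. Set \(\tilde p_i=w_ic_i\). Then
\[
\tilde p_i\pi_{ij}=w_i c_i\frac{b_{ij}w_j}{c_i}=b_{ij}w_iw_j,
\]
which is symmetric in \((i,j)\) because \(b_{ij}=b_{ji}\) for undirected \(G_B\). Hence \(\tilde p_i\pi_{ij}=\tilde p_j\pi_{ji}\), so the chain is a random walk on \(G_B\) with symmetric edge weights \(b_{ij}w_iw_j\). Summing over \(i\) gives \(\sum_i\tilde p_i\pi_{ij}=\tilde p_j\sum_i\pi_{ji}=\tilde p_j\), i.e. \(P^\top\tilde p=\tilde p\). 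Normalizing by \(\mathcal Z=\sum_m w_mc_m>0\) yields \(p^*\) as in \eqref{eq:stationary_resource_explicit}, with \(p_i^*>0\) and \(\mathbf 1^\top p^*=1\). This gives \eqref{eq:stationary_resource_property}.

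\textbf{Irreducibility and uniqueness.} We have \(\pi_{ij}>0\) iff \(b_{ij}=1\). Connectivity of \(G_B\) therefore gives, for any \(i,j\), a path of positive-probability transitions, so \(P\) is irreducible. By the Perron--Frobenius theorem for irreducible nonnegative matrices, the eigenvalue \(1\) of \(P^\top\) (its spectral radius, since \(P\) is stochastic) is simple. Its eigenspace is spanned by a positive vector, so \(p^*\) is the unique stationary distribution in \(\Delta\).

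\textbf{Convergence.} If the chain is also aperiodic, then \(P\) is primitive: some power \(P^m\) has all entries positive. Perron--Frobenius then gives that every other eigenvalue has modulus strictly less than \(1\), and
\[
(P^\top)^n\to p^*\mathbf 1^\top .
\]
Applying this to \(p^{(0)}\in\Delta\) and using \(\mathbf 1^\top p^{(0)}=1\) gives \eqref{eq:stationary_resource_limit}. Alternatively, one can cite the standard convergence theorem for finite irreducible aperiodic chains, or argue via reversibility: \(P\) is self-adjoint in \(\ell^2(1/p^*)\), so it is diagonalizable with real spectrum in \((-1,1]\), and the simple eigenvalue \(1\) dominates. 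As an aside, a self-loop in \(G_B\) suffices for aperiodicity; if \(G_B\) is bipartite the chain is periodic, which is why aperiodicity must be assumed separately.

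\textbf{Main obstacle.} There is no deep obstacle. The only subtle points are spotting the detailed-balance weights \(b_{ij}w_iw_j\) and ensuring \(w_i,c_i>0\). The latter requires excluding zero-degree neurons when \(\alpha\le0\) (where \(0^\alpha\) is undefined or infinite), and I would state this assumption explicitly.
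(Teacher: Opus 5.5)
Your proposal is correct and follows essentially the same route as the paper. Both arguments use row-stochasticity of $P$, irreducibility from positive-probability paths in the connected graph $G_B$, detailed balance $p_i^*\pi_{ij}=b_{ij}w_iw_j/\mathcal{Z}$ summed over $i$ to get $P^\top p^*=p^*$, and Perron--Frobenius (via primitivity under aperiodicity) for uniqueness and convergence; your explicit assumption $k_i\ge 1$ makes precise what the paper takes for granted when it asserts $k_j^{\alpha}>0$, and your optional aside has a small slip: it is $P^\top$, not $P$, that is self-adjoint in $\ell^2(1/p^*)$ ($P$ is self-adjoint in $\ell^2(p^*)$), which does not affect the main argument.
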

	
	\begin{proof}
		The proof proceeds in three steps.
		
		First, for every node \(i\),
		\begin{equation}
			\sum_{j=1}^{N}\pi_{ij}
			=
			\frac{
				\sum_{j=1}^{N}b_{ij}k_j^{\alpha}
			}{
				\sum_{l=1}^{N}b_{il}k_l^{\alpha}
			}
			=1.
		\end{equation}
		Hence \(P\) is row stochastic, and therefore
		\[
		\mathbf{1}^{\top}p^{(n+1)}
		=
		\mathbf{1}^{\top}P^\top p^{(n)}
		=
		\mathbf{1}^{\top}p^{(n)}.
		\]
		Thus, if \(p^{(0)}\in\Delta\), then \(p^{(n)}\in\Delta\) for all
		\(n\geq0\).
		
		Second, we establish irreducibility. Since \(G_B\) is connected and
		\(k_j^{\alpha}>0\), every edge of \(G_B\) generates a strictly positive
		transition probability. For any pair of nodes \(i\) and \(j\), there
		exists a finite path
		\[
		i=v_0\rightarrow v_1\rightarrow\cdots\rightarrow v_m=j
		\]
		such that
		\begin{equation}
			\pi_{v_0v_1}
			\pi_{v_1v_2}
			\cdots
			\pi_{v_{m-1}v_m}>0.
		\end{equation}
		Consequently,
		\[
		(P^m)_{ij}>0
		\]
		for some finite \(m\), which proves that the Markov chain is irreducible.
		
		Finally, we determine its stationary distribution. Using the above
		notation and \eqref{eq:stationary_resource_explicit},
		\begin{equation}
			\begin{aligned}
				p_i^{*}\pi_{ij}
				&=
				\frac{w_i c_i}{\mathcal Z}
				\frac{b_{ij}w_j}{c_i}\\
				&=
				\frac{b_{ij}w_iw_j}{\mathcal Z}.
			\end{aligned}
		\end{equation}
		Because \(G_B\) is undirected, \(b_{ij}=b_{ji}\), and hence
		\begin{equation}\label{eq:detailed_balance}
			p_i^{*}\pi_{ij}
			=
			p_j^{*}\pi_{ji}.
		\end{equation}
		Thus, \(p^*\) satisfies the detailed-balance relation. Summing
		\eqref{eq:detailed_balance} over \(i\) gives
		\begin{equation}
			\begin{aligned}
				(P^\top p^*)_j
				&=
				\sum_{i=1}^{N}\pi_{ij}p_i^*\\
				&=
				p_j^*\sum_{i=1}^{N}\pi_{ji}\\
				&=
				p_j^*,
			\end{aligned}
		\end{equation}
		and therefore \(P^\top p^*=p^*\).
		
		Since the chain is finite and irreducible, the stationary distribution
		is unique and strictly positive. If it is additionally aperiodic, the
		transition matrix is primitive. The Perron--Frobenius theorem then gives
		\[
		(P^\top)^n p^{(0)}
		\rightarrow p^*
		\]
		for every \(p^{(0)}\in\Delta\), completing the proof.
	\end{proof}
	
	Theorem~\ref{thm:stationary_resource} gives a direct relation between network structure and long-term resource availability:
	\begin{equation}\label{eq:stationary_resource_structure}
		p_i^{*}
		\propto
		k_i^{\alpha}
		\sum_{l=1}^{N}b_{il}k_l^{\alpha}.
	\end{equation}
	The first factor describes the structural preference of node \(i\), whereas the second reflects its weighted accessibility through the transport layer. Hence, both the neuronal topology and the transport topology contribute to the stationary resource heterogeneity. In particular, \(\alpha\) controls how strongly structural centrality is amplified or suppressed in the stationary resource field.
	
	The existence of \(p^*\) alone is not sufficient for coupling it to the continuous neuronal dynamics. The rate at which the resource field relaxes toward \(p^*\) must also be characterized.
	
	\begin{theorem}\label{thm:resource_exponential}
		Under the conditions of Theorem~\ref{thm:stationary_resource}, suppose
		further that the Markov chain is aperiodic. Define
		\begin{equation}
			W_*
			=
			\operatorname{diag}
			\left(
			p_1^*,p_2^*,\dots,p_N^*
			\right)
		\end{equation}
		and the weighted norm
		\begin{equation}\label{eq:weighted_resource_norm}
			\|z\|_{W_*^{-1}}
			=
			\sqrt{
				z^\top W_*^{-1}z
			}.
		\end{equation}
		Let
		\[
		1=\lambda_1,\lambda_2,\dots,\lambda_N
		\]
		denote the eigenvalues of \(P\), and define
		\begin{equation}\label{eq:resource_gamma}
			\gamma
			=
			\max_{2\leq r\leq N}
			|\lambda_r|.
		\end{equation}
		Then \(0\leq\gamma<1\), and for any \(p^{(0)}\in\Delta\),
		\begin{equation}\label{eq:weighted_exponential_convergence}
			\left\|
			p^{(n)}-p^*
			\right\|_{W_*^{-1}}
			\leq
			\gamma^n
			\left\|
			p^{(0)}-p^*
			\right\|_{W_*^{-1}}.
		\end{equation}
		Consequently,
		\begin{equation}\label{eq:euclidean_exponential_convergence}
			\left\|
			p^{(n)}-p^*
			\right\|_2
			\leq
			C_R\gamma^n
			\left\|
			p^{(0)}-p^*
			\right\|_2,
		\end{equation}
		where
		\begin{equation}\label{eq:resource_norm_constant}
			C_R
			=
			\sqrt{
				\frac{p_{\max}^*}{p_{\min}^*}
			},
			\qquad
			p_{\max}^*=\max_i p_i^*,
			\qquad
			p_{\min}^*=\min_i p_i^*.
		\end{equation}
		Therefore, the topology-dependent resource field converges exponentially
		to \(p^*\).
	\end{theorem}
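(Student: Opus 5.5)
The approach is to exploit the detailed-balance relation \eqref{eq:detailed_balance} from Theorem~\ref{thm:stationary_resource}. It says that \(W_*P\) is symmetric, since \((W_*P)_{ij}=p_i^*\pi_{ij}=b_{ij}w_iw_j/\mathcal Z\). Consequently the matrix
\[
S=W_*^{1/2}PW_*^{-1/2}
\]
is real symmetric: indeed \(S=W_*^{-1/2}(W_*P)W_*^{-1/2}\). Hence \(P\) is similar to \(S\), its eigenvalues \(\lambda_r\) are real, and \(S\) admits an orthonormal eigenbasis \(u_1,\dots,u_N\) with \(Su_r=\lambda_r u_r\). The Perron eigenvector is \(u_1=W_*^{1/2}\mathbf 1=(\sqrt{p_1^*},\dots,\sqrt{p_N^*})^\top\), because \(P\mathbf1=\mathbf1\). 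This vector is normalized because \(\mathbf1^\top p^*=1\).

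\textbf{Bound on \(\gamma\).} Row-stochasticity gives \(|\lambda_r|\le1\). Irreducibility, proved in Theorem~\ref{thm:stationary_resource}, makes the eigenvalue \(1\) simple. Aperiodicity makes \(P\) primitive, so by Perron--Frobenius no other eigenvalue lies on the unit circle; in particular \(-1\) is excluded. Therefore \(0\le\gamma<1\).

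\textbf{Contraction in the weighted norm.} Set \(\delta^{(n)}=p^{(n)}-p^*\). Then \(\delta^{(n+1)}=P^\top\delta^{(n)}\) and \(\mathbf1^\top\delta^{(n)}=0\). Change variables to \(q^{(n)}=W_*^{-1/2}\delta^{(n)}\), so that \(\|\delta^{(n)}\|_{W_*^{-1}}=\|q^{(n)}\|_2\). A direct computation gives
\[
q^{(n+1)}=W_*^{-1/2}P^\top W_*^{1/2}q^{(n)}=S^\top q^{(n)}=Sq^{(n)}.
\]
Moreover \(u_1^\top q^{(n)}=\mathbf1^\top\delta^{(n)}=0\), so \(q^{(n)}\) lies in the \(S\)-invariant subspace \(u_1^\perp\). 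On that subspace \(S\) is symmetric with spectral radius \(\gamma\), so its operator norm there equals \(\gamma\). This yields \(\|q^{(n+1)}\|_2\le\gamma\|q^{(n)}\|_2\), and iterating gives \eqref{eq:weighted_exponential_convergence}.

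\textbf{Euclidean bound.} Use the norm equivalence
\[
\frac{1}{p^*_{\max}}\|z\|_2^2\le \|z\|_{W_*^{-1}}^2\le \frac{1}{p^*_{\min}}\|z\|_2^2 .
\]
Applying the lower bound at step \(n\) and the upper bound at step \(0\) gives \(\|\delta^{(n)}\|_2\le\sqrt{p^*_{\max}}\,\gamma^n\|\delta^{(0)}\|_2/\sqrt{p^*_{\min}}\). This is exactly the constant \(C_R\) in \eqref{eq:resource_norm_constant}, which proves \eqref{eq:euclidean_exponential_convergence}.

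\textbf{Main obstacle.} The only genuinely delicate point is the step "spectral radius equals operator norm on \(u_1^\perp\)". For a general non-reversible chain this fails, and only a bound of the form \(C n^k\gamma^n\) would be available. The step therefore hinges entirely on establishing the symmetry of \(S\) via detailed balance, together with the invariance of \(u_1^\perp\). The former is the first thing I would verify carefully. The latter follows from \(S u_1=u_1\) and the symmetry of \(S\).
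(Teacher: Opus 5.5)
Your proposal is correct and follows essentially the same route as the paper: symmetrize \(P\) via \(S=W_*^{1/2}PW_*^{-1/2}\) using detailed balance, observe that \(W_*^{-1/2}(p^{(n)}-p^*)\) evolves under \(S\) and is orthogonal to the Perron direction \(W_*^{1/2}\mathbf{1}\), use that a symmetric \(S\) has operator norm \(\gamma\) on that complement, and convert with the \(p^*_{\max}/p^*_{\min}\) norm equivalence. Writing \(S=W_*^{-1/2}(W_*P)W_*^{-1/2}\) and explicitly checking that \(u_1\) is normalized and that \(u_1^\perp\) is \(S\)-invariant only makes explicit some steps the paper leaves implicit.
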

	
	\begin{proof}
		The detailed-balance relation \eqref{eq:detailed_balance} is equivalent to
		\begin{equation}\label{eq:detailed_balance_matrix}
			W_*P
			=
			P^\top W_*.
		\end{equation}
		Consider the matrix
		\begin{equation}\label{eq:symmetric_markov_operator}
			S
			=
			W_*^{1/2}
			P
			W_*^{-1/2}.
		\end{equation}
		Using \eqref{eq:detailed_balance_matrix},
		\begin{equation}
			\begin{aligned}
				S^\top
				&=
				W_*^{-1/2}
				P^\top
				W_*^{1/2}\\
				&=
				W_*^{1/2}
				P
				W_*^{-1/2}
				=
				S.
			\end{aligned}
		\end{equation}
		Thus \(S\) is symmetric and similar to \(P\). Hence all eigenvalues of
		\(P\) are real and \(P\) is diagonalizable.
		
		Since the Markov chain is irreducible and aperiodic, the Perron
		eigenvalue \(\lambda_1=1\) is simple, whereas
		\[
		|\lambda_r|<1,
		\qquad
		r=2,\dots,N.
		\]
		Therefore,
		\[
		0\leq\gamma<1.
		\]
		
		Let
		\begin{equation}
			z^{(n)}
			=
			p^{(n)}-p^*
		\end{equation}
		and introduce
		\begin{equation}
			y^{(n)}
			=
			W_*^{-1/2}z^{(n)}.
		\end{equation}
		Because \(P^\top p^*=p^*\),
		\[
		z^{(n+1)}
		=
		P^\top z^{(n)},
		\]
		and therefore
		\begin{equation}
			\begin{aligned}
				y^{(n+1)}
				&=
				W_*^{-1/2}
				P^\top
				W_*^{1/2}
				y^{(n)}\\
				&=
				S y^{(n)}.
			\end{aligned}
		\end{equation}
		
		The normalized stationary direction of \(S\) is proportional to
		\[
		q_1=W_*^{1/2}\mathbf{1}.
		\]
		Furthermore,
		\begin{equation}
			\begin{aligned}
				q_1^\top y^{(n)}
				&=
				\mathbf{1}^\top
				z^{(n)}\\
				&=
				\mathbf{1}^\top
				\bigl(p^{(n)}-p^*\bigr)
				=0.
			\end{aligned}
		\end{equation}
		Hence \(y^{(n)}\) lies entirely in the subspace transverse to the
		stationary mode. Since \(S\) is symmetric, its induced \(2\)-norm on
		this subspace is exactly \(\gamma\). Thus,
		\begin{equation}
			\|y^{(n)}\|_2
			\leq
			\gamma^n
			\|y^{(0)}\|_2.
		\end{equation}
		By definition,
		\[
		\|y^{(n)}\|_2
		=
		\|p^{(n)}-p^*\|_{W_*^{-1}},
		\]
		which proves \eqref{eq:weighted_exponential_convergence}.
		
		Finally, for any \(z\in\mathbb{R}^N\),
		\begin{equation}
			\frac{1}{p_{\max}^*}\|z\|_2^2
			\leq
			\|z\|_{W_*^{-1}}^2
			\leq
			\frac{1}{p_{\min}^*}\|z\|_2^2.
		\end{equation}
		Combining this norm equivalence with
		\eqref{eq:weighted_exponential_convergence} gives
		\begin{equation}
			\|p^{(n)}-p^*\|_2
			\leq
			\sqrt{
				\frac{p_{\max}^*}{p_{\min}^*}
			}
			\gamma^n
			\|p^{(0)}-p^*\|_2,
		\end{equation}
		which proves \eqref{eq:euclidean_exponential_convergence}.
	\end{proof}
	
	Theorem~\ref{thm:resource_exponential} shows that the topology-dependent resource field converges exponentially to its unique stationary distribution \(p^*\). If one redistribution step corresponds to a physical interval \(\Delta t_R\), the dominant relaxation time can be characterized by
	\begin{equation}\label{eq:resource_relaxation_time}
		T_R
		=
		-\frac{\Delta t_R}{\ln\gamma},
		\qquad
		0<\gamma<1.
	\end{equation}
	Let \(T_X\) denote a characteristic time scale of the neuronal dynamics. We consider the separated-timescale regime
	\begin{equation}\label{eq:qssa_condition}
		\varepsilon_R
		=
		\frac{T_R}{T_X}
		\ll1.
	\end{equation}
	Under this condition, the resource redistribution relaxes substantially faster than the neuronal dynamics, and hence
	\begin{equation}\label{eq:qssa_resource}
		p^{(n)}
		\simeq
		p^*
	\end{equation}
	is adopted on the neuronal time scale. The resource-transport layer and the neuronal layer evolve concurrently in the full duplex dynamics. The quasi-steady approximation \eqref{eq:qssa_resource} is used only in the analytical treatment, based on the faster relaxation of the resource process, and does not imply sequential evolution of the two layers. The quantity \(p_i^*>0\) therefore acts as a constant topology-dependent measure of the resource available to neuron \(i\) in the subsequent stability analysis.
	
	We next introduce a fully local synchronization signal. Since both the diffusive interaction and the feedback controller act through the membrane-potential channel, define
	\begin{equation}\label{eq:local_error_scalar}
		\eta_i(t)
		=
		x_i(t)
		-
		\frac{1}{k_i}
		\sum_{j\in\mathcal N_i}
		a_{ij}x_j(t)
		=
		\frac{1}{k_i}
		\sum_{j=1}^{N}
		l_{ij}x_j(t).
	\end{equation}
	Thus, \(\eta_i(t)\) measures the deviation of neuron \(i\) from the instantaneous mean state of its neighbors and requires no global network information.
	
	Let
	\[
	x=[x_1,\dots,x_N]^\top,
	\qquad
	e_x=x-\bar x\mathbf{1},
	\qquad
	D=\operatorname{diag}(k_1,\dots,k_N),
	\]
	and
	\[
	\eta=[\eta_1,\dots,\eta_N]^\top .
	\]
	Then
	\begin{equation}\label{eq:local_error_matrix}
		\eta
		=
		D^{-1}Lx
		=
		D^{-1}Le_x,
	\end{equation}
	where the second equality follows from \(L\mathbf{1}=0\).
	For a connected undirected graph,
	\begin{equation}\label{eq:local_global_equivalence}
		\frac{\lambda_2(L)}{k_{\max}}
		\|e_x\|_2
		\leq
		\|\eta\|_2
		\leq
		\frac{\lambda_N(L)}{k_{\min}}
		\|e_x\|_2,
	\end{equation}
	where
	\[
	k_{\min}=\min_i k_i,
	\qquad
	k_{\max}=\max_i k_i.
	\]
	Therefore,
	\begin{equation}\label{eq:local_sync_equivalence}
		\eta(t)\rightarrow0
		\quad\Longleftrightarrow\quad
		e_x(t)\rightarrow0.
	\end{equation}
	Equation~\eqref{eq:local_sync_equivalence} establishes an exact equivalence between the local disagreement coordinate \(\eta\) and the mean-field synchronization error. The feedback can therefore be implemented without global mean-field information.
	
	Based on the local disagreement and the stationary resource field, we introduce the resource-weighted adaptive feedback law
	\begin{equation}\label{eq:local_controller}
		u_i(t)
		=
		-d_i(t)\eta_i(t),
		\qquad
		d_i(0)\geq0,
	\end{equation}
	with the adaptive gain
	\begin{equation}\label{eq:local_adaptation}
		\dot d_i(t)
		=
		\delta_i p_i^*
		\eta_i^2(t)e^{\mu t},
		\qquad
		\delta_i>0,
		\qquad
		\mu>0.
	\end{equation}
	The adaptation law has a direct local interpretation: the feedback gain increases only when a node simultaneously possesses available resource \(p_i^*\) and exhibits nonzero synchronization disagreement \(\eta_i\). Moreover, since \(\dot d_i\geq0\), the gain remains nonnegative for all \(t\geq0\).
	
	Before establishing synchronization, we first verify that the closed-loop dynamics remain well posed and bounded.
	
	\begin{lemma}\label{lem:closed_loop_boundedness}
		Consider the closed-loop HR network
		\eqref{eq:net_general} under the local feedback law
		\eqref{eq:local_controller}--\eqref{eq:local_adaptation}.
		Suppose that \(G\) is connected and undirected, and that
		\(a>0\), \(d_0>0\), \(r>0\), and \(s_0>0\).
		Then, for every finite initial condition with \(d_i(0)\geq0\),
		the closed-loop solution is forward complete, and the neuronal
		states
		\[
		X_i(t)=[x_i(t),y_i(t),z_i(t)]^\top
		\]
		remain bounded for all \(t\geq0\).
	\end{lemma}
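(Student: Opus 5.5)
The plan is a two-stage argument. First I would establish local well-posedness and forward completeness using a quartic-dissipation Lyapunov function on finite horizons. Then I would upgrade this to a bound that is uniform in \(t\), which requires controlling the adaptive gains \(d_i(t)\).

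\textbf{Local existence.} The closed-loop vector field in \((X_1,\dots,X_N,d_1,\dots,d_N)\) consists of \eqref{eq:net_general} with \(u_i=-d_i\eta_i\), together with \eqref{eq:local_adaptation}. It is polynomial in the states and smooth in \(t\), so it is locally Lipschitz. A unique maximal solution therefore exists on some \([0,T_{\max})\). Since \(\dot d_i\ge 0\) and \(d_i(0)\ge0\), we have \(d_i(t)\ge0\) throughout. To prove \(T_{\max}=\infty\) it suffices to exclude finite-time blow-up on every bounded interval \([0,T]\), \(T<T_{\max}\).

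\textbf{Finite-horizon bound.} I would use the standard HR energy function
\[
V=\sum_{i=1}^N\Bigl(\tfrac12 x_i^2+\tfrac{\kappa_1}{2}y_i^2+\tfrac{\kappa_2}{2}z_i^2\Bigr),
\]
with constants \(\kappa_1,\kappa_2>0\) to be chosen. Along solutions, the intrinsic part contributes \(-a x_i^4+b x_i^3\). Young's inequality absorbs the cross terms \(x_iy_i\), \(x_iz_i\), \(d_0\kappa_1x_i^2y_i\) and \(r s_0\kappa_2 x_iz_i\) into \(-\tfrac a2x_i^4-\tfrac{\kappa_1}{2}y_i^2-\tfrac{r\kappa_2}{2}z_i^2\) plus a constant; this uses \(a,d_0,r,s_0>0\). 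The diffusive term contributes \(-\sigma x^\top Lx\le0\) by positive semi-definiteness of \(L\).

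The control term contributes \(-\sum_i d_ix_i\eta_i\), with \(\eta=D^{-1}Lx\). This term is not sign-definite in general, because \(\operatorname{diag}(d)D^{-1}L\) is not symmetric. I would bound it crudely by
\[
\Bigl|\sum_i d_ix_i\eta_i\Bigr|\le \bar d(t)\,\frac{\lambda_N(L)}{k_{\min}}\,\|x\|_2^2,
\qquad \bar d(t)=\max_i d_i(t).
\]
Because the quartic \(-\tfrac a2x_i^4\) dominates any quadratic, this yields
\[
\dot V\le -cV+C_0+C_1\bar d(t)^2 ,
\]
with \(c,C_0,C_1\) independent of \(t\). On \([0,T]\) the gain \(\bar d\) is continuous, hence bounded, so \(V\) stays bounded there by Grönwall's inequality. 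Then \(\eta_i\) is bounded on \([0,T]\), and \(\dot d_i\le\delta_ip_i^*\eta_i^2e^{\mu T}\) keeps \(d_i\) finite. Hence no component escapes in finite time, and \(T_{\max}=\infty\).

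\textbf{Uniform boundedness (main obstacle).} The estimate above only gives \(V(t)\lesssim \sup_{s\le t}\bar d(s)^2\). Uniform boundedness of \(X_i\) therefore reduces to showing that the gains \(d_i(t)\) stay bounded. The factor \(e^{\mu t}\) makes this delicate. My first attempt would be a composite function
\[
U=V_{\rm err}+\sum_i\frac{(d_i-d^\star)^2}{2\delta_ip_i^*}e^{-\mu t}.
\]
Here \(V_{\rm err}\) is a \(k_i\)-weighted quadratic form in the \(x\)-disagreement, chosen so that the control term in \(\dot V_{\rm err}\) is \(-\sum_id_i\eta_i^2\)-like and is cancelled by the derivative of the gain term. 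The aim is to show that \(\int_0^\infty p_i^*\eta_i^2e^{\mu t}\,dt<\infty\), so that each \(d_i\) converges to a finite limit.

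If this cancellation does not close, because the weights needed to symmetrize \(\operatorname{diag}(d)D^{-1}L\) conflict with the \(p_i^*\) weighting, I would instead exploit the structure of the estimate itself. On any interval where \(\|x\|\) is large, the quartic term makes \(V\) decrease at a rate independent of \(d\) once \(\|x\|^2\gtrsim \bar d\lambda_N(L)/(a\,k_{\min})\). Combined with \(\eta\to0\) driving \(\dot d_i\) down, this should let me bound the trajectory on a compact absorbing set, uniformly in \(t\).
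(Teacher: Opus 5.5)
Your proposal has a genuine gap, and it is the step you flag as the main obstacle: you never obtain a bound on $X_i$ that is uniform in $t$. The missing idea is the sign structure of the feedback. Your estimate $|\sum_i d_ix_i\eta_i|\le\bar d\,\lambda_N(L)k_{\min}^{-1}\|x\|_2^2$ discards the fact that $u_i=-d_i\eta_i$ with $d_i\ge0$, like the diffusive term, only pulls $x_i$ toward the mean of its neighbours. The paper exploits this with an invariant-box (maximum-principle) argument on $\Omega_R=[-R,R]^N\times[-Y_R,Y_R]^N\times[-Z_R,Z_R]^N$. If $x_i=R$, then every neighbour has $x_j\le R$, so $\eta_i\ge0$ and $-(\sigma k_i+d_i)\eta_i\le0$ however large $d_i$ is; the case $x_i=-R$ is symmetric. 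Taking $Y_R\ge d_0R^2+|c|+1$ and $Z_R\ge s_0(R+|x_0|)+1$ makes the $y$- and $z$-faces inward. For large $R$, the cubic term gives $aR^3>Y_R+|b|R^2+Z_R+|I_{\mathrm{ext}}|$, so the $x$-faces are inward too. The result is a bound on the neuronal states that is independent of the gains. Only afterwards is $|\eta_i|\le 2R$ used to show that each $d_i$ stays finite on bounded intervals, which gives forward completeness.

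Because you discard the sign, you are forced to reduce the lemma to boundedness of $d_i(t)$. The statement does not claim this, and neither of your attempts delivers it. In your composite function the gain term carries $e^{-\mu t}$, so $U(t)\le U(0)e^{-\rho t}$ only yields $(d_i-d^\star)^2\lesssim e^{(\mu-\rho)t}$. Integrating $\dot U\le\cdots$ controls $\int_0^\infty\eta_i^2\,dt$, not $\int_0^\infty\eta_i^2e^{\mu t}\,dt$. In the paper's synchronization theorem $\rho\le 2r$, far below $\mu$ for the parameters used, so bounded gains are not obtainable this way. Moreover, the disagreement estimates there use the state bound $B$ that this lemma is supposed to supply. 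Your fallback is circular: its absorbing threshold grows with $\bar d$, and $\eta\to0$ is the conclusion of the synchronization theorem, which itself relies on this lemma. Even granting $\eta\to0$, the factor $e^{\mu t}$ means $\dot d_i\to0$ does not follow. Your finite-horizon step has the same flaw in miniature. Boundedness of $\bar d$ on $[0,T]$ for $T<T_{\max}$ is automatic. The coupled inequalities $\dot V\le C_0+C_1\bar d^{\,2}$ and $\dot{\bar d}\lesssim e^{\mu T}V$ are superlinear and do not exclude blow-up as $t\to T_{\max}$. A gain-independent state bound, which the box argument provides, removes both problems at once.
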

	
	\begin{proof}
		From \eqref{eq:local_adaptation},
		\[
		\dot d_i(t)
		=
		\delta_i p_i^*\eta_i^2(t)e^{\mu t}
		\geq0,
		\]
		and therefore
		\[
		d_i(t)\geq d_i(0)\geq0
		\]
		throughout the interval of existence.
		
		Using
		\[
		\sum_{j=1}^{N}l_{ij}x_j
		=
		k_i\eta_i,
		\]
		the membrane-potential equation of the closed-loop network can be
		written as
		\begin{equation}\label{eq:boundedness_x_dynamics}
			\dot x_i
			=
			y_i-a x_i^3+b x_i^2-z_i+I_{\rm ext}
			-
			\bigl(\sigma k_i+d_i\bigr)\eta_i .
		\end{equation}
		
		We construct a bounded positively invariant region for the neuronal
		states. Choose \(R>\max_{1\leq i\leq N}|x_i(0)|\) sufficiently large
		and define
		\begin{equation}\label{eq:bounded_YR}
			Y_R
			=
			\max
			\left\{
			\max_{1\leq i\leq N}|y_i(0)|+1,\,
			d_0R^2+|c|+1
			\right\},
		\end{equation}
		and
		\begin{equation}\label{eq:bounded_ZR}
			Z_R
			=
			\max
			\left\{
			\max_{1\leq i\leq N}|z_i(0)|+1,\,
			s_0(R+|x_0|)+1
			\right\}.
		\end{equation}
		Consider the compact set
		\begin{equation}\label{eq:bounded_region}
			\Omega_R
			=
			[-R,R]^N
			\times
			[-Y_R,Y_R]^N
			\times
			[-Z_R,Z_R]^N .
		\end{equation}
		
		For the fast recovery variable,
		\[
		\dot y_i
		=
		c-d_0x_i^2-y_i.
		\]
		On the boundary \(y_i=Y_R\),
		\[
		\dot y_i
		=
		c-d_0x_i^2-Y_R
		\leq
		|c|-Y_R
		<0,
		\]
		whereas on \(y_i=-Y_R\),
		\[
		\dot y_i
		=
		c-d_0x_i^2+Y_R
		\geq
		-|c|-d_0R^2+Y_R
		>0.
		\]
		Hence the vector field points inward on the two \(y\)-boundaries.
		
		For the slow recovery variable,
		\[
		\dot z_i
		=
		r\left[s_0(x_i+x_0)-z_i\right].
		\]
		On \(z_i=Z_R\),
		\[
		\dot z_i
		\leq
		r\left[s_0(R+|x_0|)-Z_R\right]
		<0,
		\]
		while on \(z_i=-Z_R\),
		\[
		\dot z_i
		\geq
		r\left[-s_0(R+|x_0|)+Z_R\right]
		>0.
		\]
		Thus the vector field also points inward on the \(z\)-boundaries.
		
		It remains to examine the membrane-potential boundaries. Suppose
		first that \(x_i=R\). Since \(x_j\leq R\) for every neighboring
		node \(j\),
		\[
		\eta_i
		=
		R-
		\frac{1}{k_i}
		\sum_{j\in\mathcal N_i}a_{ij}x_j
		\geq0.
		\]
		Because \(\sigma k_i+d_i\geq0\), the network coupling and adaptive
		feedback in \eqref{eq:boundedness_x_dynamics} can only point inward
		at this upper boundary. Therefore,
		\begin{equation}\label{eq:x_upper_bound_direction}
			\dot x_i
			\leq
			Y_R-aR^3+|b|R^2+Z_R+|I_{\rm ext}|.
		\end{equation}
		
		Likewise, if \(x_i=-R\), then every neighboring state satisfies
		\(x_j\geq-R\), and hence
		\[
		\eta_i
		=
		-R-
		\frac{1}{k_i}
		\sum_{j\in\mathcal N_i}a_{ij}x_j
		\leq0.
		\]
		The coupling-feedback contribution is now nonnegative, and
		\begin{equation}\label{eq:x_lower_bound_direction}
			\dot x_i
			\geq
			-Y_R+aR^3-|b|R^2-Z_R-|I_{\rm ext}|.
		\end{equation}
		
		By construction,
		\[
		Y_R=O(R^2),
		\qquad
		Z_R=O(R),
		\]
		whereas the leading restoring term of the HR dynamics is cubic.
		Since \(a>0\), \(R\) can therefore be chosen sufficiently large such
		that
		\begin{equation}\label{eq:R_bounded_condition}
			aR^3
			>
			Y_R+|b|R^2+Z_R+|I_{\rm ext}|.
		\end{equation}
		For such an \(R\),
		\[
		x_i=R
		\quad\Longrightarrow\quad
		\dot x_i<0,
		\]
		and
		\[
		x_i=-R
		\quad\Longrightarrow\quad
		\dot x_i>0.
		\]
		Hence the vector field points inward on every boundary of
		\(\Omega_R\), and the neuronal trajectory cannot leave this compact
		set.
		
		Finally, within \(\Omega_R\),
		\[
		|\eta_i(t)|
		\leq
		2R.
		\]
		Consequently, for every finite \(T>0\),
		\begin{align}
			d_i(t)
			&=
			d_i(0)
			+
			\delta_i p_i^*
			\int_0^t
			\eta_i^2(s)e^{\mu s}\,ds \nonumber\\
			&\leq
			d_i(0)
			+
			4\delta_i p_i^*R^2
			\int_0^T e^{\mu s}\,ds
			<\infty,
			\qquad
			0\leq t\leq T.
		\end{align}
		Thus the adaptive gains cannot undergo finite-time escape. Since the
		right-hand side of the augmented closed-loop system is locally
		Lipschitz and all state components remain finite on every bounded
		time interval, the solution can be continued for arbitrary
		\(t\geq0\). Therefore the closed-loop system is forward complete,
		and all neuronal states remain bounded.
	\end{proof}
	
	\begin{theorem}\label{thm:local_adaptive_sync}
		Consider the HR network \eqref{eq:net_general} over a connected
		undirected graph \(G\), and let
		\(p^*\in\mathbb{R}_{>0}^N\) be the stationary resource distribution
		given by Theorems~\ref{thm:stationary_resource} and
		\ref{thm:resource_exponential} under the quasi-steady-state
		approximation. Under the conditions of
		Lemma~\ref{lem:closed_loop_boundedness}, for any
		\(\delta_i>0\), \(\mu>0\), and \(d_i(0)\geq0\), the local adaptive
		feedback law
		\eqref{eq:local_controller}--\eqref{eq:local_adaptation}
		drives the network to the synchronization manifold, i.e.,
		\begin{equation}\label{eq:thm3_sync_result}
			\lim_{t\to\infty}
			\|X_i(t)-X_j(t)\|_2
			=
			0,
			\qquad
			\forall\,i,j.
		\end{equation}
		Equivalently,
		\begin{equation}\label{eq:thm3_mean_sync_result}
			\lim_{t\to\infty}
			\|X_i(t)-\bar X(t)\|_2
			=
			0,
			\qquad
			i=1,\dots,N.
		\end{equation}
	\end{theorem}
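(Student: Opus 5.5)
The plan is to build a Lyapunov function on the transverse coordinates that absorbs the adaptive gains. Lemma~\ref{lem:closed_loop_boundedness} supplies the two ingredients we need: forward completeness and a compact invariant set $\Omega_R$. On that set the HR nonlinearities are uniformly Lipschitz.

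\textbf{Error dynamics and incremental bounds.} Write the error dynamics $\dot e_i = F(X_i)-\frac1N\sum_j F(X_j) - \sigma(\sum_j l_{ij}x_j)\,[1,0,0]^\top - d_i\eta_i[1,0,0]^\top + \frac1N\sum_j d_j\eta_j[1,0,0]^\top$. The last term is orthogonal to $e$ because $\sum_i e_i=0$, so it drops out of inner products with $e$.

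For the $y$ and $z$ channels the intrinsic dynamics are contracting up to a forcing by $x$:
\begin{itemize}
\item the $y$-error obeys $\dot e_{y,i} = -e_{y,i} - d_0(x_i^2-\overline{x^2})$;
\item the $z$-error obeys $\dot e_{z,i} = -r e_{z,i} + r s_0 e_{x,i}$.
\end{itemize}
On $\Omega_R$ the forcing in both is bounded by $C\|e_x\|$. I will therefore use a weighted quadratic form $V_0=\frac12\sum_i(e_{x,i}^2+\kappa_y e_{y,i}^2+\kappa_z e_{z,i}^2)$. Young's inequality then yields
\begin{equation*}
\dot V_0 \le \rho\|e_x\|_2^2 - \tfrac{\kappa_y}{2}\|e_y\|^2-\tfrac{\kappa_z r}{2}\|e_z\|^2 - \sum_i (\sigma k_i+d_i)\eta_i e_{x,i}
\end{equation*}
for some constant $\rho$ depending on $R$. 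Here the cubic term $-a(x_i^3-x_j^3)$ is even helpful; it is bounded on $\Omega_R$ anyway.

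\textbf{Gain terms and the main obstacle.} The controller acts through $\eta_i e_{x,i}$, not through $\eta_i^2$. This mismatch between the local signal $\eta=D^{-1}Le_x$ and $e_x$ is the main obstacle.

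I would first try replacing the $e_x$ part of $V_0$ by $\frac12 e_x^\top L e_x$, i.e.\ weighting by the Laplacian. Its derivative along the control term gives $-\sum_i (\sigma k_i+d_i)\eta_i (Le_x)_i = -\sum_i (\sigma k_i + d_i)k_i\eta_i^2$, which is negative definite in $\eta$.

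This change costs something: the HR $x$-dynamics now enter through $e_x^\top L(\cdot)$. These cross terms are still bounded by $\rho'\|e_x\|^2$, and the left inequality of \eqref{eq:local_global_equivalence} turns them into $\rho''\|\eta\|^2$. The $y,z$ cross terms need matching treatment, with $Le_x$ dominated by $\lambda_N\|e_x\|$.

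\textbf{Adaptive gains and Barbalat.} Add the estimator term $\sum_i \frac{k_i}{2\delta_i p_i^*}(d_i-\hat d)^2 e^{-\mu t}$ with a constant $\hat d$ large enough that $\hat d\,k_{\min}>\rho''$. Differentiating, the $d_i\eta_i^2$ terms cancel exactly because of the factor $p_i^*$ in \eqref{eq:local_adaptation}; this is precisely why $p_i^*>0$ from Theorem~\ref{thm:stationary_resource} is needed. What remains is $\dot V\le -\epsilon(\|\eta\|^2+\|e_y\|^2+\|e_z\|^2)$. The $e^{-\mu t}$ factor contributes a further nonpositive term $-\frac{\mu}{2}(\cdot)e^{-\mu t}$.

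Hence $V$ is bounded and $\eta,e_y,e_z\in L^2$. Their derivatives are bounded, because the states lie in $\Omega_R$ and $d_i(t)\eta_i$ is controlled through the bound on $V$ in the weighted sense. One point is delicate: $d_i$ could grow, so I would argue from $\int \eta_i^2 e^{\mu t}<\infty$ wherever possible. Barbalat's lemma then gives $\eta\to0$ and hence $e_x\to0$ by \eqref{eq:local_sync_equivalence}. Finally, $e_y\to0$ and $e_z\to0$ follow from the stable linear filters driven by a vanishing input, which establishes \eqref{eq:thm3_mean_sync_result} and thus \eqref{eq:thm3_sync_result}.
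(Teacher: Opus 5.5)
Your Lyapunov construction is sound and is essentially the paper's. The paper uses the same Laplacian weighting of the $x$-part, the same estimator term $\sum_i \frac{k_i}{2\delta_i p_i^*}(d_i-\hat d)^2e^{-\mu t}$, and the same exact cancellation of the $d_i\eta_i^2$ terms. The paper also weights $y$ and $z$ by $L$, with $\kappa=1/(rs_0)$ so that the $x$--$z$ cross terms cancel identically. Your unweighted $e_y,e_z$ work equally well, using Young's inequality together with $\|Le_x\|_2\le\lambda_N(L)\|e_x\|_2$ and $\|e_x\|_2\le \frac{k_{\max}}{\lambda_2(L)}\|\eta\|_2$.

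The gap is in your closing step. Barbalat's lemma needs $\|\eta(t)\|_2^2$ to be uniformly continuous. That requires a uniform bound on $\dot\eta=D^{-1}L\dot x$, and $\dot x_i$ contains the term $-d_i\eta_i$. Boundedness of $V$ does not supply this. The bound $V(t)\le V(0)$ only gives $(d_i-\hat d)^2e^{-\mu t}\le C$, i.e.\ $|d_i(t)-\hat d|\le C'e^{\mu t/2}$, which is not a uniform bound on $d_i\eta_i$. Your fallback, $\int_0^\infty\eta_i^2e^{\mu t}\,dt<\infty$, is equivalent to boundedness of $d_i$ itself, because $d_i(t)=d_i(0)+\delta_ip_i^*\int_0^t\eta_i^2e^{\mu s}\,ds$. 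It does not follow from $\eta\in L^2$. Even exponential decay of $\|\eta\|_2^2$ at a rate $\rho\le\mu$ would not give it. So, as written, $\eta\to0$ is not established.

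The missing observation is that you do not need Barbalat at all, because every component of your $V$ is already dominated by the dissipation you derived:
\begin{itemize}
\item $\frac12e_x^\top Le_x\le\frac{\lambda_N(L)}{2}\|e_x\|_2^2\le\frac{\lambda_N(L)k_{\max}^2}{2\lambda_2(L)^2}\|\eta\|_2^2$;
\item the $e_y$ and $e_z$ parts are matched by $-\epsilon(\|e_y\|_2^2+\|e_z\|_2^2)$;
\item the estimator part $V_d$ is matched by the term $-\mu V_d$ produced by the factor $e^{-\mu t}$.
\end{itemize}
Hence $\dot V\le-\rho V$ for some $\rho>0$, so $V(t)\le V(0)e^{-\rho t}$. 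All transverse errors then decay exponentially, with no information needed on the growth of $d_i$; this is how the paper concludes.

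Alternatively, you can keep an $L^2$ argument. The coupling and control contributions to $\frac{d}{dt}\frac12x^\top Lx$ equal $-\sigma\|Lx\|_2^2-\sum_ik_id_i\eta_i^2\le0$, so this derivative is bounded \emph{above} on $\Omega_R$. A nonnegative integrable function whose derivative is bounded above must tend to zero, and this route also avoids bounding $d_i$.
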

	
	\begin{proof}
		Let
		\[
		x=[x_1,\dots,x_N]^\top,\qquad
		y=[y_1,\dots,y_N]^\top,\qquad
		z=[z_1,\dots,z_N]^\top,
		\]
		and denote
		\[
		x^{[2]}
		=
		[x_1^2,\dots,x_N^2]^\top,
		\qquad
		x^{[3]}
		=
		[x_1^3,\dots,x_N^3]^\top.
		\]
		
		By Lemma~\ref{lem:closed_loop_boundedness}, the neuronal trajectories
		remain in a compact positively invariant set. Hence there exists a
		constant \(B>0\) such that
		\begin{equation}\label{eq:x_bounded_thm3}
			|x_i(t)|
			\leq B,
			\qquad
			i=1,\dots,N,
			\qquad
			t\geq0.
		\end{equation}
		
		Choose
		\begin{equation}\label{eq:kappa_thm3}
			\kappa
			=
			\frac{1}{rs_0}>0
		\end{equation}
		and introduce the graph-weighted synchronization measure
		\begin{equation}\label{eq:graph_sync_measure}
			V_s
			=
			\frac{1}{2}
			\left(
			x^\top Lx
			+
			y^\top Ly
			+
			\kappa z^\top Lz
			\right).
		\end{equation}
		Because \(G\) is connected,
		\[
		V_s=0
		\quad\Longleftrightarrow\quad
		X_1=X_2=\cdots=X_N.
		\]
		Thus, \(V_s\) measures the distance of the network from the
		synchronization manifold without introducing the global mean into
		the controller.
		
		Let \(d^\star>0\) be an auxiliary constant to be specified below and
		define
		\begin{equation}\label{eq:adaptive_lyapunov_term}
			V_d
			=
			\frac{1}{2}
			\sum_{i=1}^{N}
			\frac{
				k_i(d_i-d^\star)^2
			}{
				\delta_i p_i^*
			}
			e^{-\mu t}.
		\end{equation}
		Consider the Lyapunov function
		\begin{equation}\label{eq:local_lyapunov}
			V
			=
			V_s+V_d.
		\end{equation}

		Differentiating the synchronization part gives
		\begin{equation}\label{eq:Vs_dot_initial}
			\dot V_s
			=
			x^\top L\dot x
			+
			y^\top L\dot y
			+
			\kappa z^\top L\dot z.
		\end{equation}
		Substituting the HR dynamics and using \(L\mathbf{1}=0\), all
		constant terms vanish, yielding
		\begin{equation}\label{eq:Vs_dot_expand}
			\begin{aligned}
				\dot V_s
				&=
				x^\top Ly
				-a x^\top Lx^{[3]}
				+b x^\top Lx^{[2]}
				-x^\top Lz\\
				&\quad
				-\sigma x^\top L^2x
				+x^\top Lu\\
				&\quad
				-d_0y^\top Lx^{[2]}
				-y^\top Ly\\
				&\quad
				+\kappa rs_0 z^\top Lx
				-\kappa r z^\top Lz .
			\end{aligned}
		\end{equation}
		Since \(L=L^\top\) and \(\kappa rs_0=1\),
		\begin{equation}\label{eq:xz_cancel}
			-x^\top Lz
			+
			\kappa rs_0z^\top Lx
			=
			0.
		\end{equation}
		
		We now estimate the remaining nonlinear terms. For the cubic HR
		nonlinearity,
		\begin{equation}\label{eq:cubic_graph_identity}
			\begin{aligned}
				x^\top Lx^{[3]}
				&=
				\frac{1}{2}
				\sum_{i,j=1}^{N}
				a_{ij}
				(x_i-x_j)
				(x_i^3-x_j^3)\\
				&=
				\frac{1}{2}
				\sum_{i,j=1}^{N}
				a_{ij}
				(x_i-x_j)^2
				\left(
				x_i^2+x_ix_j+x_j^2
				\right)
				\geq0.
			\end{aligned}
		\end{equation}
		Since the HR parameter \(a>0\),
		\begin{equation}\label{eq:cubic_dissipative}
			-a x^\top Lx^{[3]}
			\leq0.
		\end{equation}
		Thus, the cubic nonlinearity is intrinsically dissipative in the
		graph-disagreement coordinates and need not be bounded by a
		Lipschitz constant.
		
		For the quadratic term, \eqref{eq:x_bounded_thm3} gives
		\[
		|x_i^2-x_j^2|
		\leq
		2B|x_i-x_j|,
		\]
		and hence
		\begin{equation}\label{eq:quadratic_graph_bound}
			\left|
			x^\top Lx^{[2]}
			\right|
			\leq
			2B\,x^\top Lx.
		\end{equation}
		Furthermore, by the Cauchy--Schwarz and Young inequalities,
		\begin{equation}\label{eq:xy_graph_bound}
			x^\top Ly
			\leq
			\frac{1}{2}x^\top Lx
			+
			\frac{1}{2}y^\top Ly.
		\end{equation}
		Similarly,
		\begin{equation}\label{eq:y_x2_graph_bound}
			\left|
			y^\top Lx^{[2]}
			\right|
			\leq
			2B
			\sqrt{
				(x^\top Lx)(y^\top Ly)
			},
		\end{equation}
		and therefore
		\begin{equation}\label{eq:y_x2_young}
			\left|
			d_0y^\top Lx^{[2]}
			\right|
			\leq
			\frac{1}{4}y^\top Ly
			+
			4d_0^2B^2x^\top Lx.
		\end{equation}
		
		Define
		\begin{equation}\label{eq:CB_definition}
			C_B
			=
			\frac{1}{2}
			+
			2|b|B
			+
			4d_0^2B^2.
		\end{equation}
		Using
		\eqref{eq:cubic_dissipative}--\eqref{eq:y_x2_young}
		in \eqref{eq:Vs_dot_expand}, we obtain
		\begin{equation}\label{eq:Vs_bound_before_control}
			\begin{aligned}
				\dot V_s
				&\leq
				C_B x^\top Lx
				-\frac{1}{4}y^\top Ly
				-\kappa r z^\top Lz\\
				&\quad
				-\sigma\|Lx\|_2^2
				+
				(Lx)^\top u .
			\end{aligned}
		\end{equation}

		From \eqref{eq:local_error_scalar},
		\begin{equation}\label{eq:Lx_eta_relation}
			(Lx)_i
			=
			k_i\eta_i.
		\end{equation}
		Therefore,
		\begin{equation}\label{eq:local_control_dissipation}
			\begin{aligned}
				(Lx)^\top u
				&=
				\sum_{i=1}^{N}
				(Lx)_iu_i\\
				&=
				-\sum_{i=1}^{N}
				k_i d_i\eta_i^2.
			\end{aligned}
		\end{equation}
		
		Differentiating \(V_d\) and using
		\eqref{eq:local_adaptation} gives
		\begin{equation}\label{eq:Vd_dot}
			\begin{aligned}
				\dot V_d
				&=
				\sum_{i=1}^{N}
				\frac{
					k_i(d_i-d^\star)\dot d_i
				}{
					\delta_i p_i^*
				}
				e^{-\mu t}
				-\mu V_d\\
				&=
				\sum_{i=1}^{N}
				k_i(d_i-d^\star)\eta_i^2
				-\mu V_d.
			\end{aligned}
		\end{equation}
		Combining \eqref{eq:local_control_dissipation} and
		\eqref{eq:Vd_dot} yields the exact cancellation
		\begin{equation}\label{eq:exact_local_cancellation}
			(Lx)^\top u+\dot V_d
			=
			-d^\star
			\sum_{i=1}^{N}
			k_i\eta_i^2
			-\mu V_d.
		\end{equation}

		Moreover,
		\begin{equation}\label{eq:eta_quadratic_form}
			\sum_{i=1}^{N}
			k_i\eta_i^2
			=
			(Lx)^\top D^{-1}(Lx).
		\end{equation}
		Since \(D\preceq k_{\max}I_N\),
		\begin{equation}\label{eq:eta_lower_1}
			(Lx)^\top D^{-1}(Lx)
			\geq
			\frac{1}{k_{\max}}
			\|Lx\|_2^2.
		\end{equation}
		For a connected graph,
		\begin{equation}\label{eq:L2_L_relation}
			\|Lx\|_2^2
			=
			x^\top L^2x
			\geq
			\lambda_2(L)x^\top Lx.
		\end{equation}
		Consequently,
		\begin{equation}\label{eq:eta_lower_graph}
			\sum_{i=1}^{N}
			k_i\eta_i^2
			\geq
			\frac{\lambda_2(L)}{k_{\max}}
			x^\top Lx.
		\end{equation}
		
		Combining
		\eqref{eq:Vs_bound_before_control},
		\eqref{eq:exact_local_cancellation}, and
		\eqref{eq:eta_lower_graph} gives
		\begin{equation}\label{eq:Vdot_local_pre}
			\begin{aligned}
				\dot V
				&\leq
				\left[
				C_B
				-\sigma\lambda_2(L)
				-\frac{d^\star\lambda_2(L)}{k_{\max}}
				\right]
				x^\top Lx\\
				&\quad
				-\frac{1}{4}y^\top Ly
				-\kappa r z^\top Lz
				-\mu V_d.
			\end{aligned}
		\end{equation}
		
		The constant \(d^\star\) is free and appears only in the Lyapunov
		function. Hence it can always be chosen sufficiently large such that
		\begin{equation}\label{eq:auxiliary_cx}
			c_x
			=
			\frac{d^\star\lambda_2(L)}{k_{\max}}
			+\sigma\lambda_2(L)
			-C_B
			>0.
		\end{equation}
		This choice imposes no constraint on the actual feedback gains
		\(d_i(t)\) or on their initial values. Therefore,
		\begin{equation}\label{eq:Vdot_local_negative}
			\dot V
			\leq
			-c_x x^\top Lx
			-\frac{1}{4}y^\top Ly
			-\kappa r z^\top Lz
			-\mu V_d.
		\end{equation}
		
		Let
		\begin{equation}\label{eq:rho_thm3}
			\rho
			=
			\min
			\left\{
			2c_x,\,
			\frac{1}{2},\,
			2r,\,
			\mu
			\right\}
			>0.
		\end{equation}
		From the definition \(V=V_s+V_d\), it follows that
		\begin{equation}\label{eq:Vdot_exponential_local}
			\dot V
			\leq
			-\rho V,
		\end{equation}
		and hence
		\begin{equation}\label{eq:V_exponential_local}
			V(t)
			\leq
			V(0)e^{-\rho t}.
		\end{equation}
		
		It remains to relate the graph disagreement measure to the usual
		synchronization error. Define
		\[
		e_x=x-\bar x\mathbf{1},
		\qquad
		e_y=y-\bar y\mathbf{1},
		\qquad
		e_z=z-\bar z\mathbf{1}.
		\]
		Because \(L\mathbf{1}=0\) and the graph is connected,
		\begin{equation}\label{eq:graph_mean_error_bound}
			\begin{aligned}
				x^\top Lx&\geq\lambda_2(L)\|e_x\|_2^2,\\
				y^\top Ly&\geq\lambda_2(L)\|e_y\|_2^2,\\
				z^\top Lz&\geq\lambda_2(L)\|e_z\|_2^2.
			\end{aligned}
		\end{equation}
		Using \eqref{eq:V_exponential_local},
		\begin{equation}\label{eq:full_error_bound_local}
			\sum_{i=1}^{N}
			\|X_i(t)-\bar X(t)\|_2^2
			\leq
			\frac{
				2\max\{1,\kappa^{-1}\}
			}{
				\lambda_2(L)
			}
			V(0)e^{-\rho t}.
		\end{equation}
		Thus,
		\[
		X_i(t)-\bar X(t)\rightarrow0,
		\qquad
		i=1,\dots,N,
		\]
		which is equivalent to
		\[
		X_i(t)-X_j(t)\rightarrow0,
		\qquad
		\forall\,i,j.
		\]
		Hence the controlled HR network achieves synchronization.
	\end{proof}
	
	\begin{figure}[!b]
		\centering
		\includegraphics[width=\linewidth]{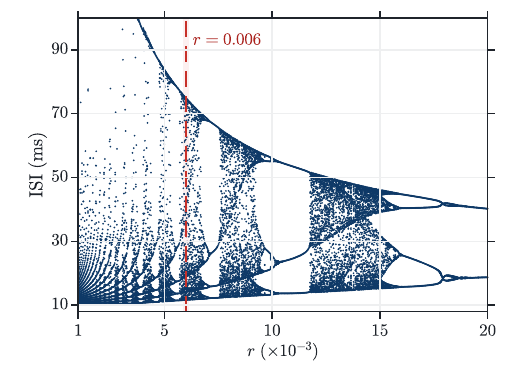}
		\caption{
			Interspike-interval bifurcation diagram of the HR oscillator with respect to \(r\) at \(I_{\mathrm{ext}}=3.2\). Alternating chaotic and periodic bursting windows identify the principal regimes of the intrinsic neuronal dynamics.
		}
		\label{fig:hr_bifurcation}
	\end{figure}
	
	\section{Experiments and Results}
	
	The experiments examine the formation of the topology-dependent stationary resource field and the synchronization dynamics generated by the resulting resource-weighted adaptive feedback. The neuronal interaction layer \(G\) and the resource-transport layer \(G_B\) share \(N=1000\) nodes and are generated independently. Their structural properties are summarized in Table~\ref{tab:network_statistics}.
	
	\begin{table}[t]
		\centering
		\caption{Structural properties of the two network layers.}
		\label{tab:network_statistics}
		\small
		\setlength{\tabcolsep}{5pt}
		\renewcommand{\arraystretch}{1.12}
		\begin{tabular}{@{}lcc@{}}
			\toprule
			Property
			& \shortstack{Neuronal\\layer \(G\)}
			& \shortstack{Transport\\layer \(G_B\)} \\
			\midrule
			Topology & Barabási--Albert & Erdős--Rényi \\
			Generation parameter & \(m=3\) & \(q_B=0.1\) \\
			Degree range & -- & \([73,136]\) \\
			Mean degree & \(5.984\) & \(99.68\) \\
			Clustering coefficient & \(0.0341\) & \(0.0998\) \\
			Average path length & \(3.4493\) & \(1.8985\) \\
			\bottomrule
		\end{tabular}
	\end{table}
	
	The selected realization of \(G_B\) is connected, and a self-loop \(b_{ii}=1\) is added to every node, making the associated Markov chain irreducible and aperiodic. The HR parameters are
	\(a=1\), \(b=3\), \(c=1\), \(d_0=5\), \(s_0=4\), \(x_0=1.6\), and \(I_{\mathrm{ext}}=3.2\). The slow time-scale parameter is fixed at \(r=0.006\), selected from the chaotic regime in Fig.~\ref{fig:hr_bifurcation}. The adaptive parameters are \(\delta_i=0.2\), \(\mu=0.35\), and \(d_i(0)=0\). Unless otherwise specified, \(\sigma=0.05\). The continuous dynamics are integrated by fourth-order Runge--Kutta with \(h=0.01\) over \(t\in[0,100]\). 
	
	Figure~\ref{fig:hr_bifurcation} shows the repeated alternation between periodic and chaotic bursting as \(r\) varies. The value \(r=0.006\) lies in a chaotic bursting window and is used in all subsequent experiments.
	
	\subsection{Formation of the Stationary Resource Field}
	\label{subsec:stationary_resource_field}
	
	\begin{figure*}[!t]
		\centering
		\begin{minipage}[t]{0.40\textwidth}
			\centering
			\makebox[\linewidth][l]{\textbf{(a)}}\par
			\vspace{0.4ex}
			\includegraphics[width=\linewidth]{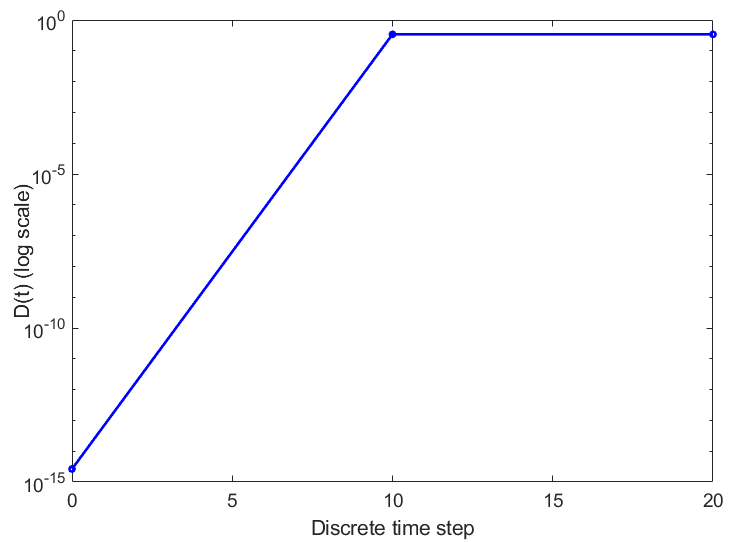}
		\end{minipage}
		\hfill
		\begin{minipage}[t]{0.55\textwidth}
			\centering
			\makebox[\linewidth][l]{\textbf{(b)}}\par
			\vspace{0.4ex}
			\includegraphics[width=\linewidth]{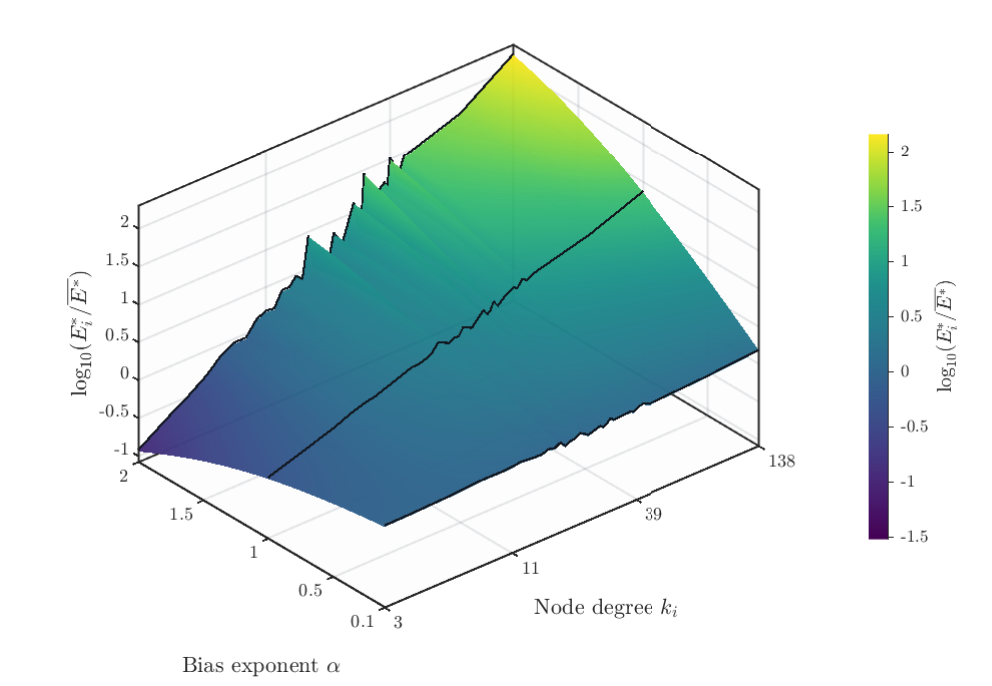}
		\end{minipage}
		\caption{Formation of the stationary resource field.
			\textbf{(a)} Departure \(D(n)=\|p^{(n)}-p^{(0)}\|_1\) from the initial allocation.
			\textbf{(b)} Degree-conditioned stationary resource landscape over \(\alpha\in[0.1,2]\); the height and color represent \(\log_{10}\overline{\varepsilon}(k,\alpha)\), where \(\varepsilon_i=Np_i^*\).}
		\label{fig:stationary_resource_field}
	\end{figure*}
	
	The resource process reaches its stationary field rapidly. We monitor the redistribution by
	\begin{equation}
		D(n)=\left\|p^{(n)}-p^{(0)}\right\|_1 .
		\label{eq:departure_from_initial}
	\end{equation}
	Figure~\ref{fig:stationary_resource_field}(a) shows that \(D(n)\) reaches a plateau after approximately ten transport steps, consistent with the fast relaxation established in Theorem~3.2. This transient is short compared with the HR evolution considered below and supports the quasi-stationary treatment of the resource field.
	
	To display the stationary allocation on a common scale, we use the relative resource \(\varepsilon_i(\alpha)=Np_i^*(\alpha)\); \(\varepsilon_i=1\) corresponds to the network-average allocation. For nodes of degree \(k\), \(\overline{\varepsilon}(k,\alpha)\) denotes the degree-conditioned mean. Figure~\ref{fig:stationary_resource_field}(b) shows a progressive tilt of the stationary field toward high-degree nodes as \(\alpha\) increases. At \(\alpha=0.1\), the allocation remains broadly distributed. At \(\alpha=1\), the highest-degree \(10\%\) of nodes hold about one third of the total resource, whereas at \(\alpha=2\) their share rises to roughly three quarters. The total resource is unchanged; the bias alters where it is available.
	
	Because \(p_i^*\) directly weights the gain growth in Eq.~\eqref{eq:local_adaptation}, this redistribution changes the spatial deployment of adaptive dissipation and provides the basis for the synchronization pathways examined next.
	
	\subsection{Synchronization Pathway Reorganization}
	
	Figure~\ref{fig:fig42representativegainerror} compares three representative bias levels. To characterize the network response, we use
	\begin{equation}
		\begin{aligned}
			E(t)
			&=
			\left[
			\frac{1}{N}\sum_{i=1}^{N}
			\bigl(x_i(t)-\bar{x}(t)\bigr)^2
			\right]^{1/2},\\
			J_e&=\int_0^T E^2(t)\,\mathrm{d}t,\qquad
			J_u=\int_0^T\frac{1}{N}\sum_{i=1}^{N}u_i^2(t)\,\mathrm{d}t .
		\end{aligned}
		\label{eq:finite_horizon_metrics}
	\end{equation}
	Here, \(E(t)\) measures the network-wide voltage dispersion, \(J_e\) the accumulated incoherence, and \(J_u\) the network-averaged control effort over the observation window. Nodes are ranked by degree into the top \(10\%\) hubs, middle \(40\%\), and peripheral \(50\%\), denoted by \(\mathcal{Q}_{\mathrm H}\), \(\mathcal{Q}_{\mathrm M}\), and \(\mathcal{Q}_{\mathrm P}\), respectively. For \(q\in\{\mathrm H,\mathrm M,\mathrm P\}\), the degree-class mean gain and RMS voltage dispersion are
	\begin{equation}
		\begin{aligned}
			\bar d_q(t)
			&=\frac{1}{|\mathcal{Q}_q|}
			\sum_{i\in\mathcal{Q}_q}d_i(t),\\
			E_q(t)
			&=
			\left[
			\frac{1}{|\mathcal{Q}_q|}
			\sum_{i\in\mathcal{Q}_q}
			\bigl(x_i(t)-\bar x(t)\bigr)^2
			\right]^{1/2}.
		\end{aligned}
		\label{eq:degree_class_metrics}
	\end{equation}
	and the network-mean gain is \(\bar d(t)=N^{-1}\sum_{i=1}^N d_i(t)\).
	
	\begin{figure*}[!t]
		\centering
		\includegraphics[width=0.90\textwidth]
		{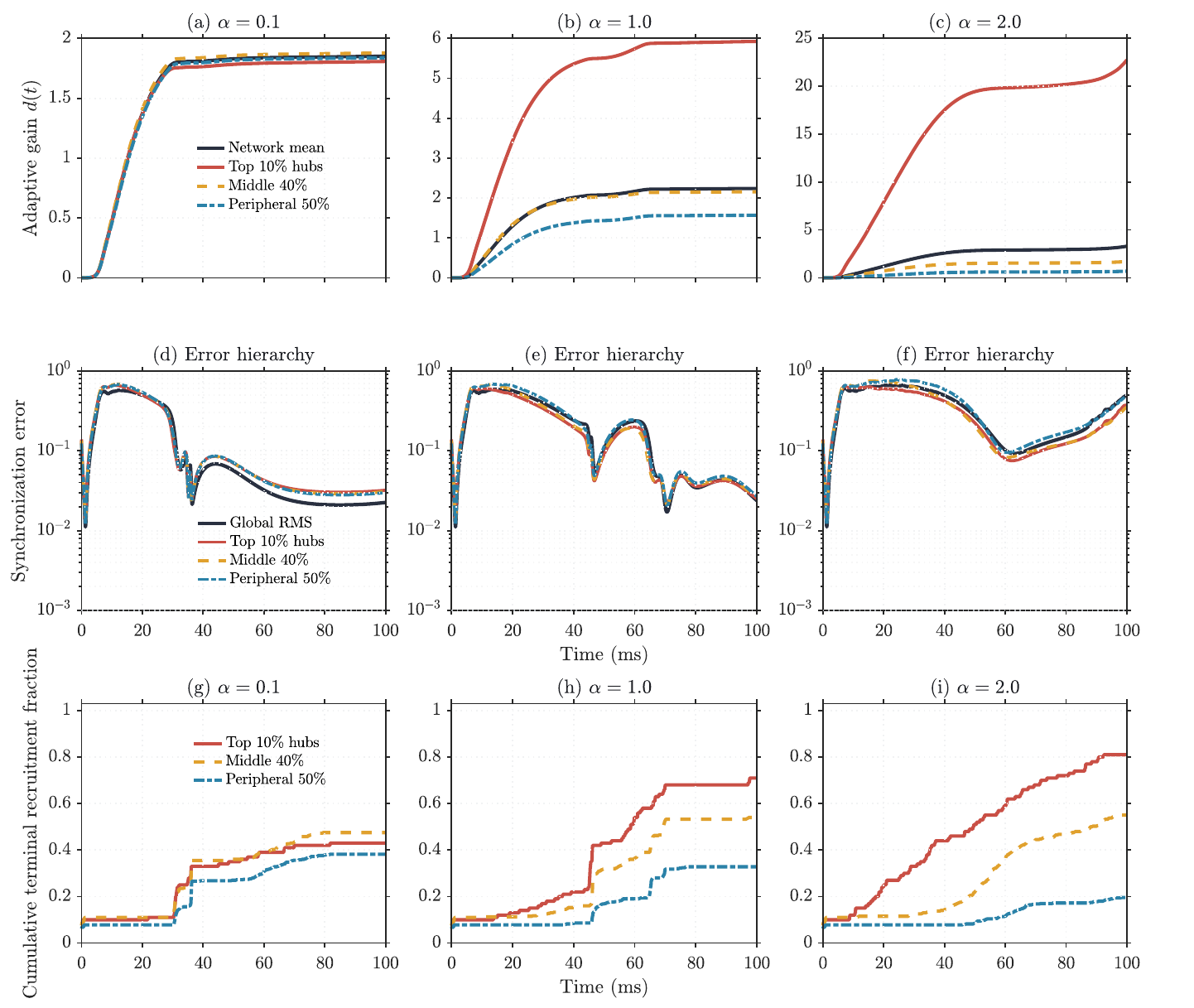}
		\caption{
			Degree-resolved synchronization pathways for \(\alpha=0.1,1,2\).
			The top row shows adaptive gains for the network mean and three degree classes, the middle row shows the corresponding error dynamics, and the bottom row shows cumulative first-sustained recruitment for the hub, middle-degree, and peripheral groups.
		}
		\label{fig:fig42representativegainerror}
	\end{figure*}
	
	At \(\alpha=0.1\), the resource field is broadly distributed and the adaptive gains of the three degree classes remain close. Their error curves contract along similar trajectories, and no class maintains a persistent lead. The transient is therefore close to a collective contraction.
	
	At \(\alpha=1\), the gain curves separate clearly. The hub population accumulates dissipation earlier, while enough resource remains outside the highest-degree group for the middle and peripheral populations to continue adapting. The error traces acquire a persistent temporal order: hubs contract first, the middle-degree group follows, and the periphery is recruited later. The contraction--rebound episodes reflect the chaotic bursting dynamics, but the degree ordering survives these excursions.
	
	At \(\alpha=2\), the hub gains become much larger than those of the other degree classes. Hub errors contract under strong local dissipation, whereas peripheral errors remain appreciable for much of the \(100\,\mathrm{ms}\) window. The resulting hierarchy is stronger, but recruitment propagates through the network more slowly.
	
	The bottom row of Fig.~\ref{fig:fig42representativegainerror} is intended to resolve the onset of node capture rather than its final retention. Let
	\(\xi_i(t)=|x_i(t)-\bar{x}(t)|\). For this representative pathway analysis, we use the fixed local tolerance \(\varepsilon_{\mathrm s}=10^{-2}\) and a residence interval \(t_{\mathrm h}=2\,\mathrm{ms}\). The first-sustained recruitment time is defined as
	\begin{equation}
		\tau_i^{\mathrm S}
		=
		\inf\left\{
		t\in[0,T-t_{\mathrm h}]:
		\xi_i(\tau)\leq\varepsilon_{\mathrm s},\;
		\forall\,\tau\in[t,t+t_{\mathrm h}]
		\right\}.
		\label{eq:first_sustained_recruitment_time}
	\end{equation}
	The corresponding cumulative fraction for each degree class is
	\begin{equation}
		R_q^{\mathrm S}(t;\alpha)
		=
		\frac{1}{|\mathcal{Q}_q|}
		\sum_{i\in\mathcal{Q}_q}
		\mathbf{1}\!\left(\tau_i^{\mathrm S}\le t\right),
		\qquad
		q\in\{\mathrm H,\mathrm M,\mathrm P\}.
		\label{eq:first_sustained_recruitment_fraction}
	\end{equation}
	Thus, a node is recorded when it first enters the coherent neighborhood and remains there for at least \(2\,\mathrm{ms}\); a later excursion does not erase that first-sustained entry time. The three curves are nearly overlapping at \(\alpha=0.1\). At \(\alpha=1\), they separate in the order \(R_{\mathrm H}^{\mathrm S}>R_{\mathrm M}^{\mathrm S}>R_{\mathrm P}^{\mathrm S}\), while at \(\alpha=2\) the separation becomes larger and the peripheral class is recruited later. These representative trajectories therefore identify how topology bias reorganizes the order in which different structural populations are first captured by the contracting motion.
	
	\begin{figure*}[!t]
		\centering
		\includegraphics[width=0.86\textwidth]
		{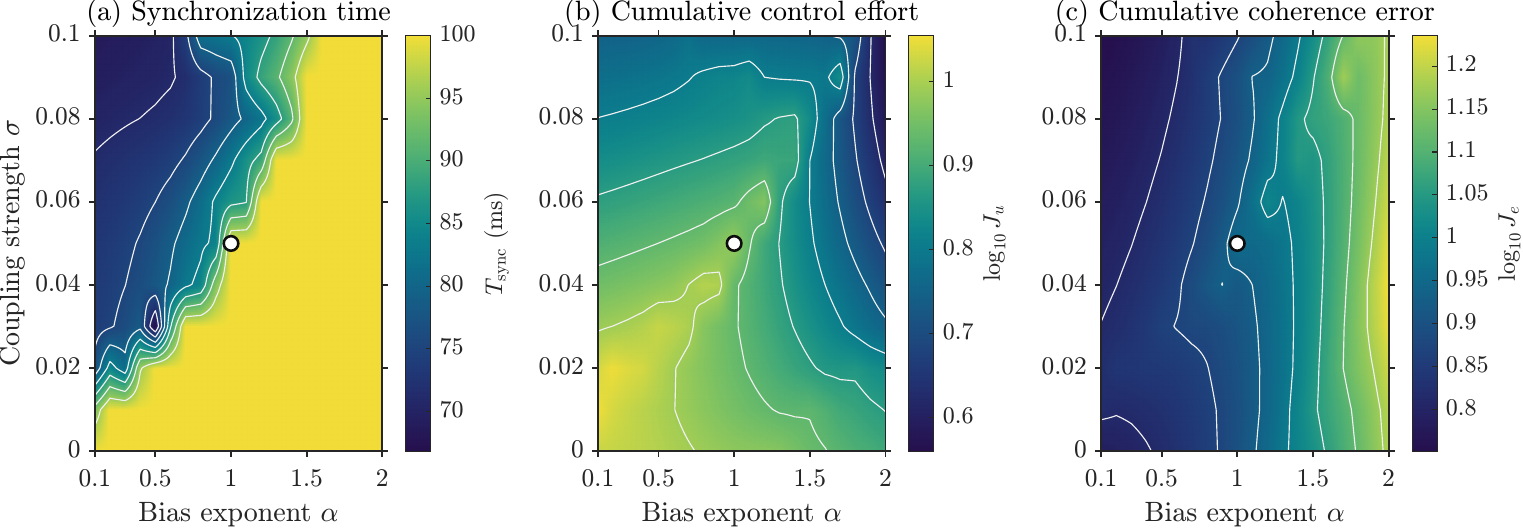}
		\caption{
			Finite-horizon response landscape in the \((\alpha,\sigma)\) plane:
			(a) synchronization time \(T_{\mathrm{sync}}\),
			(b) cumulative control effort \(\log_{10}J_u\), and
			(c) cumulative coherence error \(\log_{10}J_e\).
		}
		\label{fig:figalphasigmaresponselandscape}
	\end{figure*}
	
	The broader response is shown in Fig.~\ref{fig:figalphasigmaresponselandscape}. Here \(T_{\mathrm{sync}}\) is the earliest time at which \(E(t)\leq10^{-3}\) and remains below this level for \(2\,\mathrm{ms}\); trajectories that do not satisfy this condition within \(T=100\,\mathrm{ms}\) are right-censored at \(T\). Increasing \(\sigma\) generally shortens the transient by strengthening diffusive contraction throughout the interaction network. The response to \(\alpha\) is non-monotonic. In the strongly localized region, both \(T_{\mathrm{sync}}\) and \(J_e\) increase even though the hub weighting is stronger.
	
	The control-effort landscape gives a complementary view. Relatively small 
	\(J_u\) can coexist with long synchronization times and large accumulated 
	incoherence in the strongly localized region. The total effort alone 
	therefore does not determine the transient response; its spatial deployment 
	across the network also matters.
	
	\begin{figure*}[!t]
		\centering
		\includegraphics[width=0.96\textwidth]{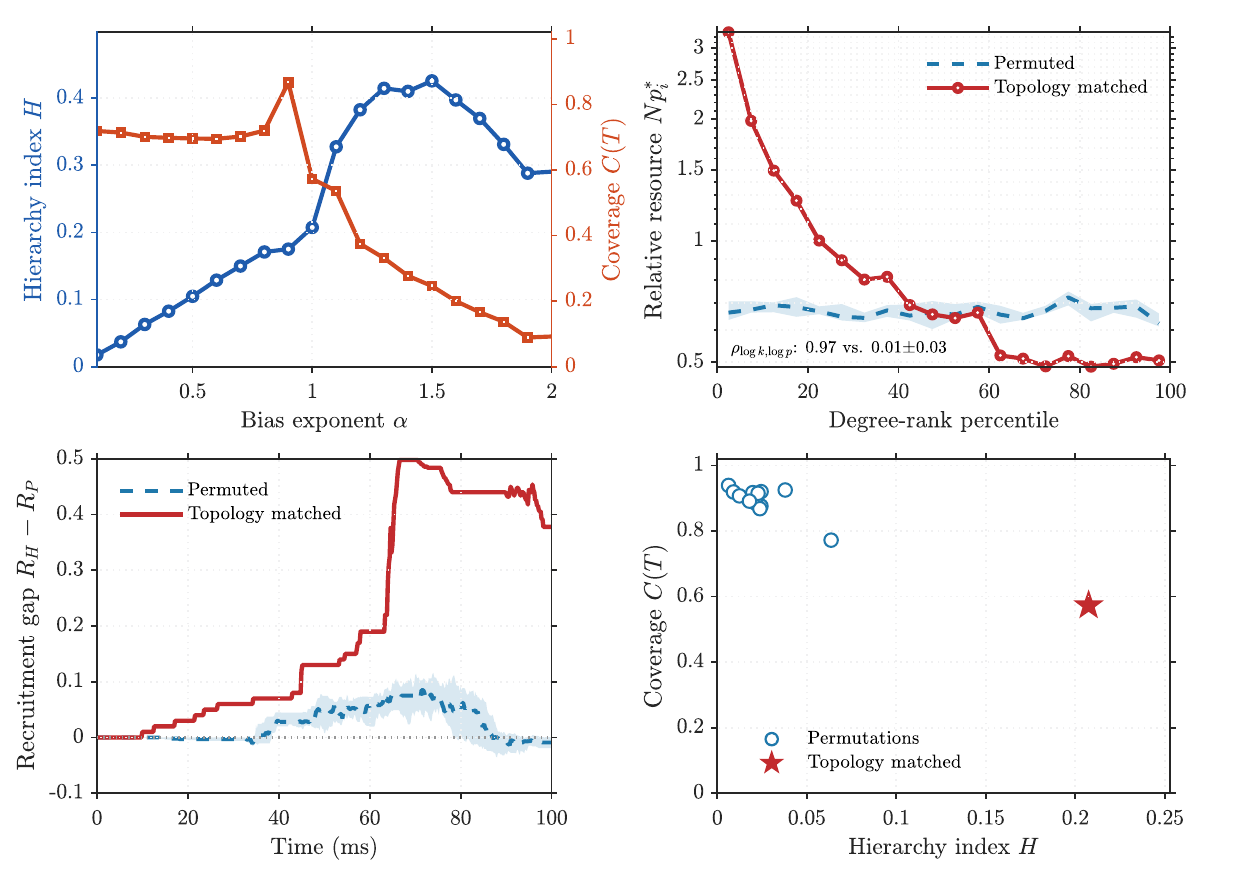}
		\caption{
			Quantification and structural test of hierarchy--coverage separation 
			under the terminal-recruitment protocol.
			(a) Hierarchy index \(H\) and finite-horizon coverage \(C(T)\) versus 
			\(\alpha\) at \(\sigma=0.05\).
			(b) Degree-ranked resource profiles for the topology-matched allocation 
			and for random permutations of the same stationary resource values.
			(c) Hub--periphery terminal-recruitment gap 
			\(R_{\mathrm H}^{\mathrm T}-R_{\mathrm P}^{\mathrm T}\) for the matched 
			allocation and the permutation ensemble.
			(d) The corresponding states in the \(H\)--\(C(T)\) plane. Twelve resource 
			permutations are used; they preserve the resource values and their 
			heterogeneity while removing their degree alignment.
		}
		\label{fig:hierarchy_coverage_alignment}
	\end{figure*}
	
	For the population-level quantification below, we use a complementary 
	terminal-retention criterion with the baseline spatial tolerance
	\begin{equation}
		\varepsilon_{\mathrm r}
		=
		2.235\times10^{-2}.
		\label{eq:terminal_recruitment_threshold}
	\end{equation}
	The sensitivity of the resulting measures to this tolerance is examined 
	explicitly in Fig.~7. A node is terminally recruited only if it enters 
	the coherent neighborhood and subsequently remains there to the end of 
	the observation window,
	\begin{equation}
		\tau_i^{\mathrm T}
		=
		\inf\left\{
		t\in[0,T]:
		\xi_i(\tau)\leq\varepsilon_{\mathrm r},\;
		\forall\,\tau\in[t,T]
		\right\}.
		\label{eq:terminal_recruitment_time}
	\end{equation}

	The corresponding degree-class recruitment fraction is
	\begin{equation}
		R_q^{\mathrm T}(t;\alpha)
		=
		\frac{1}{|\mathcal{Q}_q|}
		\sum_{i\in\mathcal{Q}_q}
		\mathbf{1}\!\left(\tau_i^{\mathrm T}\le t\right),
		\qquad
		q\in\{\mathrm H,\mathrm M,\mathrm P\}.
		\label{eq:cumulative_recruitment_fraction}
	\end{equation}
	
	The two recruitment measures answer different questions. The first-sustained time \(\tau_i^{\mathrm S}\) resolves when a node is initially captured for a nonzero residence interval, whereas the terminal time \(\tau_i^{\mathrm T}\) resolves when that node becomes permanently retained within the coherent set over the remaining finite horizon. They are therefore not expected to coincide pointwise. This distinction is particularly relevant for the contraction--rebound transients of the HR network, where a node can enter the coherent neighborhood and later leave it during a subsequent burst. Figure~\ref{fig:fig42representativegainerror} is used to resolve the ordering of initial capture, while Figs.~\ref{fig:hierarchy_coverage_alignment} and \ref{fig:hierarchy_coverage_robustness} quantify terminal retention and its sensitivity to the observation protocol.
	
	To quantify the distinction between degree ordering and network-wide terminal recruitment, we introduce
	\begin{equation}
		\begin{aligned}
			H(\alpha)
			&=
			\frac{1}{T}
			\int_0^T
			\bigl[R_{\mathrm H}^{\mathrm T}(t;\alpha)-R_{\mathrm P}^{\mathrm T}(t;\alpha)\bigr]
			\,\mathrm{d}t,\\
			C(\alpha;T)
			&=
			\frac{1}{N}
			\sum_{i=1}^{N}
			\mathbf{1}\!\left(\tau_i^{\mathrm T}\leq T\right).
		\end{aligned}
		\label{eq:hierarchy_coverage_metrics}
	\end{equation}
	The hierarchy index \(H\) measures the time-averaged terminal-recruitment lead of hubs over peripheral nodes, while \(C(\alpha;T)\) is the fraction of nodes terminally retained within the observation window.
	
	\begin{figure*}[!t]
		\centering
		\includegraphics[width=0.82\textwidth]{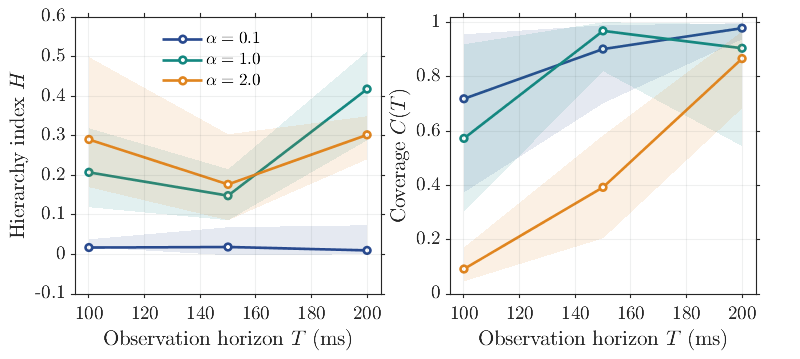}
		\caption{
			Robustness of the hierarchy--coverage distinction to the observation horizon and terminal-recruitment threshold.
			(a) Hierarchy index \(H\) and
			(b) terminal coverage \(C(T)\) for \(\alpha=0.1,1,2\) and \(T=100,150,200\,\mathrm{ms}\).
			Solid curves use the baseline terminal-recruitment threshold, and shaded bands span threshold multipliers from \(0.5\) to \(2\).
		}
		\label{fig:hierarchy_coverage_robustness}
	\end{figure*}
	
	Figure~\ref{fig:hierarchy_coverage_alignment}(a) makes the separation visible over the full \(\alpha\) scan. At \(\alpha=0.1\), \(H\simeq0.02\) and \(C(T)\simeq0.72\), consistent with weak degree ordering. The hierarchy develops as the bias increases: \(H\simeq0.20\) at \(\alpha=1\), reaches about \(0.42\) near \(\alpha=1.5\), and remains substantial at \(\alpha=2\) with \(H\simeq0.29\). Coverage follows a different trend. It reaches approximately \(0.86\) near \(\alpha=0.9\), decreases to about \(0.57\) at \(\alpha=1\), and falls to about \(0.09\) at \(\alpha=2\) within the \(100\,\mathrm{ms}\) window. The strongest degree separation and the broadest terminal recruitment therefore occur at different bias levels.
	
	The permutation experiment in Fig.~\ref{fig:hierarchy_coverage_alignment}(b)--(d) tests whether this hierarchy can be attributed to resource heterogeneity alone. Let \(a_i=\log k_i\) and \(b_i=\log p_i^*\). Their Pearson correlation coefficient is
	\begin{equation}
		\rho_{\log k,\log p}
		=
		\frac{\sum_{i=1}^{N}(a_i-\bar a)(b_i-\bar b)}
		{\left[\sum_{i=1}^{N}(a_i-\bar a)^2
			\sum_{i=1}^{N}(b_i-\bar b)^2\right]^{1/2}},
		\label{eq:resource_degree_correlation}
	\end{equation}
	where \(\bar a\) and \(\bar b\) are node averages. At \(\alpha=1\), the topology-matched field gives \(\rho_{\log k,\log p}=0.97\). Randomly reassigning the same \(p_i^*\) values to nodes leaves the total resource and the resource-value distribution unchanged, but reduces this association to \(0.01\pm0.03\). The recruitment consequence is pronounced. The matched case gives \(H\simeq0.20\), whereas all twelve permutations remain below \(H=0.07\) and cluster close to zero. At the same time, the matched terminal coverage is about \(0.57\), while the permuted cases lie approximately between \(0.77\) and \(0.94\). Thus, the hub-led hierarchy is tied to the alignment between resource placement and structural centrality rather than to resource heterogeneity by itself.

	The same terminal-recruitment ordering persists when the observation horizon and recruitment threshold are varied, as shown in Fig.~\ref{fig:hierarchy_coverage_robustness}. With the baseline threshold, the weak-bias case remains almost nonhierarchical, with \(H=0.02\), \(0.02\), and \(0.01\) for \(T=100\), \(150\), and \(200\,\mathrm{ms}\), respectively. For \(\alpha=2\), \(H\) remains clearly positive over the same horizons (\(0.29\), \(0.18\), and \(0.30\)), while terminal coverage rises from \(0.09\) to \(0.39\) and then \(0.87\). The threshold bands change the numerical values but preserve the contrast between weak-bias collective contraction and the degree-ordered transients at larger bias. The recovery of terminal coverage with a longer horizon further clarifies the strong-bias regime: excessive localization delays the final retention of peripheral nodes rather than preventing their eventual incorporation into the coherent motion.
	
	\begin{figure*}[!t]
		\centering
		\includegraphics[width=0.98\textwidth]
		{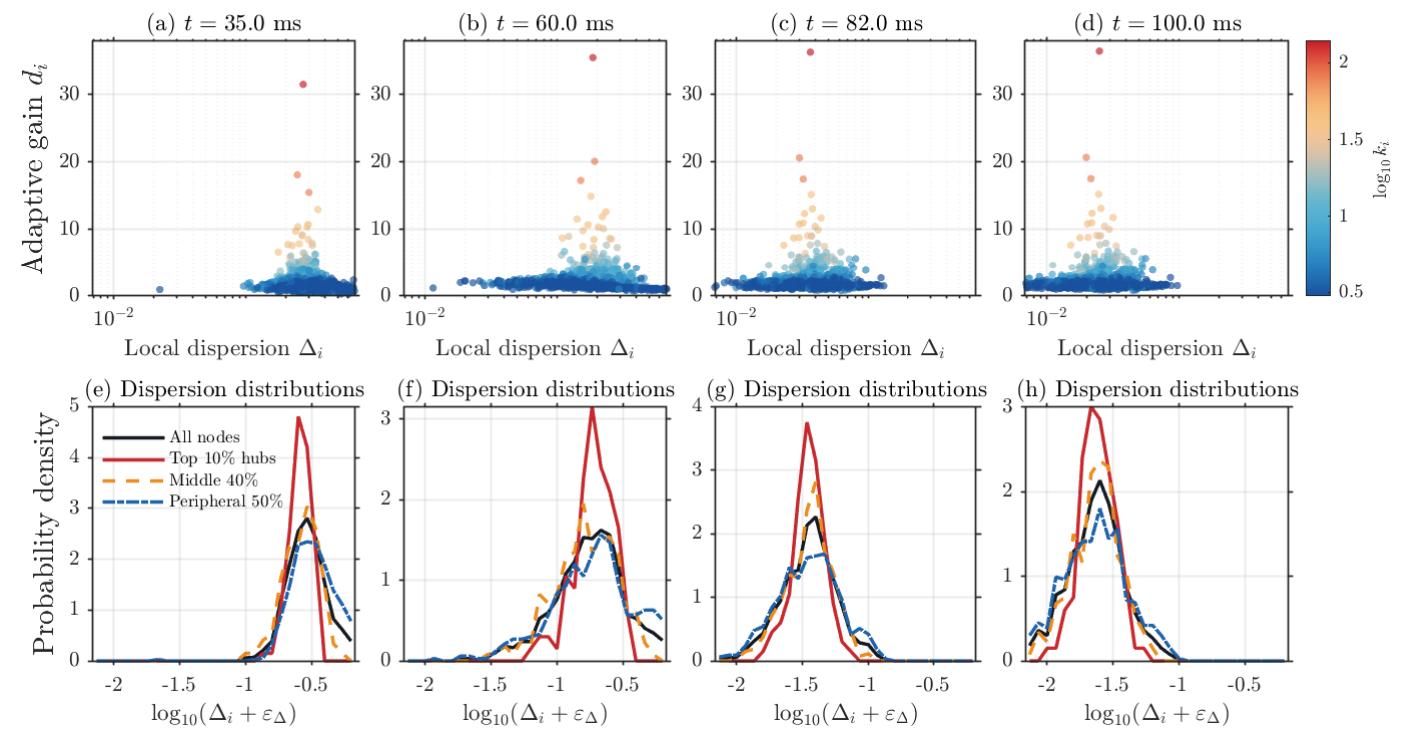}
		\caption{
			Node-wise gain--dispersion evolution for the hierarchical-recruitment regime at \(\alpha=1\).
			The upper row shows instantaneous distributions in the \((\Delta_i,d_i)\) plane, with color indicating \(\log_{10}k_i\); the lower row shows the corresponding dispersion distributions for the full network and three degree classes.
		}
		\label{fig:fig42recruitmentsnapshotsalpha1}
	\end{figure*}
	
	Figure~\ref{fig:fig42recruitmentsnapshotsalpha1} resolves the same mechanism in the gain--dispersion plane. At the earlier snapshots, most nodes still occupy a broad finite-dispersion region, while the largest adaptive gains are already concentrated predominantly among high-degree nodes. For comparable local disagreement, the adaptive law weights gain accumulation by the stationary resource \(p_i^*\), so topology-biased allocation gives structurally central nodes an earlier dissipative advantage.
	
	As the transient develops, the hub population moves toward the low-dispersion region first, followed by the middle-degree and peripheral groups. The lower-row distributions show the same ordering statistically: the hub distribution narrows earlier, whereas the peripheral population remains broadly dispersed for a longer interval. This progression is consistent with the first-sustained recruitment ordering in Fig.~\ref{fig:fig42representativegainerror} and with the terminal hierarchy quantified in Fig.~\ref{fig:hierarchy_coverage_alignment}.
	
	The gain and dispersion coordinates should not be interpreted as an instantaneous input--output relation. Since \(\dot d_i\geq0\), the adaptive gain records the accumulated history of local disagreement weighted by the available resource. A node may therefore retain a large gain after its instantaneous dispersion has already contracted, while bursting can subsequently regenerate local deviations. This memory provides a dynamical explanation for why first-sustained capture and terminal retention need not coincide, especially under strong resource localization.
	
	Overall, increasing \(\alpha\) does more than change the magnitude of adaptive damping. It reorganizes the temporal route by which coherence is established: weak bias produces nearly collective capture, intermediate bias generates a propagating hub-led hierarchy, and strong bias amplifies that hierarchy while delaying the terminal retention of the periphery. The two recruitment protocols separate the onset of local capture from its persistence and thereby expose the contraction--rebound structure that would be hidden by either measure alone.
	
	\subsection{Node-Level Dynamics of Hierarchical Recruitment}
	
	The preceding results locate the hierarchy at the level of degree classes. We next examine how it develops across individual nodes. The local dispersion
	\begin{equation}
		\Delta_i(t)
		=
		\left[
		\frac{1}{k_i}
		\sum_{j\in\mathcal{N}_i}
		\bigl(x_j(t)-x_i(t)\bigr)^2
		\right]^{1/2}
		\label{eq:local_dispersion}
	\end{equation}
	measures the instantaneous disagreement of node \(i\) with its interaction neighborhood. We focus on \(\alpha=1\), where the degree ordering is clear but recruitment still reaches substantial parts of the middle and peripheral populations.

	Figure~\ref{fig:fig42recruitmentsnapshotsalpha1} resolves the transient in the \((\Delta_i,d_i)\) plane. At \(t=35\,\mathrm{ms}\), the network still occupies a broad finite-dispersion region, but the largest gains already belong predominantly to high-degree nodes. This is consistent with the adaptive law: for comparable local disagreement, larger \(p_i^*\) produces faster gain accumulation.
	
	By \(t=60\) and \(82\,\mathrm{ms}\), high-degree nodes increasingly occupy the low-dispersion region while retaining the gains accumulated earlier in the transient. The middle-degree population contracts later, and the peripheral population remains more broadly dispersed. At \(t=100\,\mathrm{ms}\), all three distributions have shifted toward lower dispersion, while the degree-dependent gain hierarchy remains visible. The lower-row densities show the same progression: the hub distribution narrows first, followed by the middle-degree and peripheral groups.
	
	The persistence of large \(d_i\) after \(\Delta_i\) has decreased is also visible in the snapshots. Since \(\dot d_i\geq0\), gains accumulated during earlier transverse excursions are retained. Subsequent contraction--rebound episodes therefore occur under the dissipation accumulated during the preceding incoherent stage. This node-level evolution is consistent with the class-level recruitment sequence in Fig.~\ref{fig:fig42representativegainerror}.
	
	As a final check, we examine the motion after synchronization. For an infinitesimal perturbation \(\delta X(t)\) along a trajectory, the finite-time largest Lyapunov exponent is
	\begin{equation}
		\lambda_{\max}(T)
		=
		\frac{1}{T}\ln\frac{\|\delta X(T)\|}{\|\delta X(0)\|},
		\qquad
		\lambda_{\max}=\lim_{T\to\infty}\lambda_{\max}(T).
		\label{eq:largest_lyapunov_exponent}
	\end{equation}
	A positive \(\lambda_{\max}\) indicates exponential separation of nearby trajectories. As a complementary geometric measure, for \(M\) sampled states \(\{X_m\}_{m=1}^{M}\), the correlation integral is
	\begin{equation}
		C(r)=\frac{2}{M(M-1)}
		\sum_{m<n}\Theta\!\left(r-\|X_m-X_n\|\right),
		\label{eq:correlation_integral}
	\end{equation}
	where \(\Theta(\cdot)\) is the Heaviside function. Within a scaling region,
	\begin{equation}
		C(r)\sim r^{D_2},
		\qquad
		D_2=\frac{\mathrm d\log C(r)}{\mathrm d\log r},
		\label{eq:correlation_dimension}
	\end{equation}
	where \(D_2\) is the correlation dimension. Figure~\ref{fig:fractals} compares these two diagnostics for an isolated HR oscillator and a node sampled after synchronization.
	
	\begin{figure}[t]
		\centering
		\includegraphics[width=0.88\columnwidth]{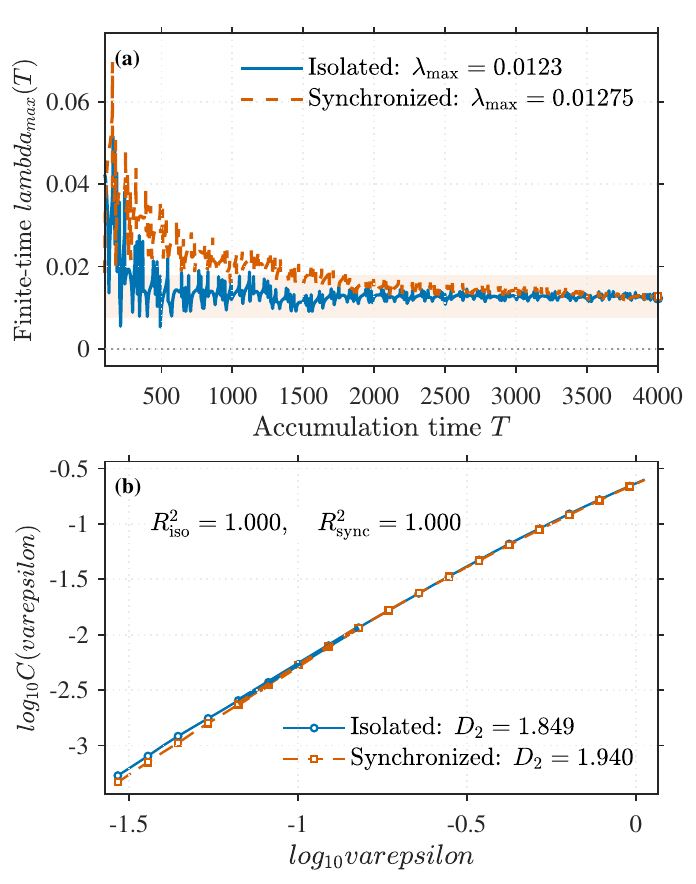}
		\caption{Chaotic dynamics before and after network synchronization.
			(a) Finite-time largest Lyapunov exponents of an isolated HR oscillator and a node sampled after synchronization.
			(b) Correlation-integral scaling of the two trajectories over the same fitting interval.}
		\label{fig:fractals}
	\end{figure}
	
	The largest Lyapunov exponent remains positive in both cases, with \(\lambda_{\max}\approx0.0123\) for the isolated oscillator and \(0.01275\) after synchronization. The corresponding correlation dimensions are \(D_2\approx1.849\) and \(1.940\). The close values indicate that synchronization suppresses transverse disagreement without replacing the intrinsic low-dimensional chaotic motion by a periodic response.
	
	This interpretation also follows from the synchronized manifold itself. When \(\eta_i=0\) and \(Lx=0\), the adaptive feedback and diffusive coupling vanish along the common trajectory. The additional dissipation therefore acts on transverse deviations, while the tangential HR dynamics remain unchanged.
	
	\FloatBarrier
	
	\section{Conclusion}
	
	This work investigated synchronization of chaotic Hindmarsh--Rose oscillators under finite topology-dependent resource allocation. A degree-biased transport process was coupled with local adaptive feedback, and the convergence of the resource field and the synchronization manifold were established analytically.
	
	The numerical results distinguish hierarchy strength from recruitment effectiveness. Weak bias produces nearly collective contraction, intermediate bias gives rise to hub-initiated hierarchical recruitment, and stronger localization further increases the degree hierarchy while slowing recruitment toward peripheral nodes. The associated non-monotonic response is consistent with a tradeoff between hub localization and network-wide dissipation coverage.
	
	The synchronized collective state remains chaotic, indicating that the adaptive feedback suppresses transverse instability without destroying the intrinsic dynamics on the synchronization manifold.

	\section*{ACKNOWLEDGMENTS}
	
	This work was supported by the Natural Science Foundation of Inner Mongolia
	Autonomous Region (Grant Nos. 2025LHMS06020, 2024MS07012 and 2025MS01018) and the
	High-Quality Development Special Research Fund of Inner Mongolia University
	of Finance and Economics (Grant Nos. NCXKY25046 and NCXKY25095).
	
	\section*{AUTHOR DECLARATIONS}
	
	\subsection*{Conflict of Interest}
	
	The authors declare no conflict of interest.
	
	\subsection*{Ethics Approval}
	
	Ethics approval is not required.
	
	\subsection*{Author Contributions}
	
	All authors have made equal contributions.
	
	\section*{DATA AVAILABILITY}
	
	Authors have no data available.
	
	\section*{GENERATIVE AI STATEMENT}
	
	The authors used ChatGPT 
	for language refinement and generating the schematic graph(see Fig.~1). No other figures were generated using
	artificial intelligence. 
	
	% TODO for the Chaos revision stage: add the required data availability
	% statement once the data-sharing route has been confirmed.
	
	% REVTeX selects the numerical AIP bibliography style. Put mybibfile.bib in
	% the same folder as this .tex file.
	\bibliography{mybibfile}
	
\end{document}